\font\tencyr=wncyr10
\font\sevencyr=wncyr7
\font\tencyr=wncyr10
\theoremstyle{plain}
\newtheorem{theo+}           {Theorem}
\newtheorem{prop+}           {Proposition}
\newtheorem{coro+}           {Corollary}
\newtheorem{lemm+}           {Lemma}
\newtheorem{conjecture}      {Conjecture}
\theoremstyle{definition}
\newtheorem{defi+}           {Definition}
\newtheorem*{ack}            {Acknowledgement}
\newtheorem{not+}            {Notation}
\theoremstyle{remark}
\newtheorem{rema+}           {Remark}
\newenvironment{theorem}{\begin{theo+}}{\end{theo+}}
\newenvironment{proposition}{\begin{prop+}}{\end{prop+}}
\newenvironment{corollary}{\begin{coro+}}{\end{coro+}}
\newenvironment{lemma}{\begin{lemm+}}{\end{lemm+}}
\newenvironment{remark}{\begin{rema+}}{\end{rema+}}
\newcommand{\al}{\alpha}
\newcommand{\ga}{\gamma}
\newcommand{\Ga}{\Gamma}
\newcommand{\la}{\lambda}
\newcommand{\bC}{\mathbb C}
\newcommand{\bCP}{\mathbb {CP}}
\newcommand{\bR}{\mathbb R}
\newcommand{\HH}{\mathcal H}
\newcommand{\abs}[1]{\mid #1 \mid}
\newcommand{\eps}{\epsilon}
\newcommand {\cP} {\mathfrak P}
\newcommand{\De}{\Delta}
\newcommand{\si}{\sigma}
\newcommand \PP {\mathcal P}
\newcommand{\bB}{\mathcal B}
\newcommand{\ds}{\displaystyle}
\newcommand{\pmtx}[1]{\begin{pmatrix*}[c]#1\end{pmatrix*}}
\newcommand^[1]{\ensuremath{\sp{{#1}}}}
\numberwithin{equation}{section}
\begin{document}

\title[Level Crossing in Random Matrices. II. ]{Level Crossing in Random Matrices. II. Random perturbation of a random matrix}

\author[T.~Gr\o sfjeld]{Tobias Gr\o sfjeld}
\address{Department of Mathematics, Stockholm University, SE-106 91 Stockholm,
      Sweden}
\email{grosfjeld@math.su.se}

\author[B.~Shapiro]{Boris Shapiro}
\address{Department of Mathematics, Stockholm University, SE-106 91 Stockholm,
      Sweden}
\email{shapiro@math.su.se}

\author [K.~Zarembo]{Konstantin Zarembo} 
\address{AlbaNova Univ. Center, Nordita, Roslagstullsbacken 23, SE-106 91 Stockholm, Sweden}
\email{zarembo@nordita.org}

\date{\today}

\keywords{random matrices, spectrum, level crossing, distribution}

\subjclass[2000]{Primary 15B52; Secondary   81Q15}

\begin{abstract} In this paper we  study   the distribution  of level crossings for the spectra of linear families $A+\la B,$ where  $A$ and $B$ are  square matrices independently  chosen from some given Gaussian ensemble and $\la$ is a complex-valued parameter.  We formulate a number of theoretical and numerical results for the classical Gaussian ensembles and some generalisations. Besides, we present intriguing numerical information about the distribution of monodromy  in case of  linear families for the  classical Gaussian ensembles of $3\times 3$ matrices. \end{abstract}

\maketitle

\section{Introduction}\label{s1}

   Given a linear operator family   
\begin{equation}\label{pencil} 
C=A+\la B,
\end{equation}
 analysis of the dependence  of its spectrum on a perturbative parameter $\la$  is a typical problem both   in fundamental natural sciences and applications, see e.g. the  classical treatise \cite{Ka}.  Depending on the situation $\la$ is considered as   a real or a complex-valued  parameter.

\medskip   
Level crossings of the spectrum  (i.e., collisions of the eigenvalues) in the family \eqref{pencil}  unavoidably occur upon the analytic continuation of a real perturbation parameter  $\la$ into the complex plane, where an intricate pattern of permutations of the eigenvalues arises due to monodromy of the spectrum at each of the level crossing points.
  The positions of  level crossings and monodromy  of the spectrum at each of them constitute an important piece of information about the spectral properties of the linear family \eqref{pencil} and the analytic structure of its spectral surface. Level crossings  determine, in particular, the accuracy of perturbative series in $\la$.  
  
  Since the late 60s, motivated by a number of fascinating observations by C. M. Bender and T. T. Wu \cite{BW}, physicists and mathematicians started considering various cases where $A$ and $B$ are, for example, self-adjoint while $\la$ is complex-valued. A very small sample of such studies can be found in  e.g., \cite{MNOP, Ro, CHM, SH,BDCP, Sm} and references therein.
 
 \medskip 
  Unfortunately, for a somewhat interesting concrete linear family \eqref{pencil}, it is usually quite difficult to exactly describe  the positions of level crossings and especially the monodromy of the spectrum, when $\la$ encircles  closed curves avoiding them.  As an illustration of  specific examples of the physics origin, the reader might consult  \cite{ShTaQu} and \cite{ShT}, where the cases of the quasi-exactly solvable quartic and sextic are considered. The corresponding locations of  level crossings are shown in Figure~\ref{figbla} below.  Although in both cases  numerical experiments reveal  very clear and intriguing patterns for the location of  level crossings as well as the corresponding monodromy,   mathematical proofs explaining these lattice-type patterns  in Figure~\ref{figbla}  are unavailable at present.   
  \begin{figure}
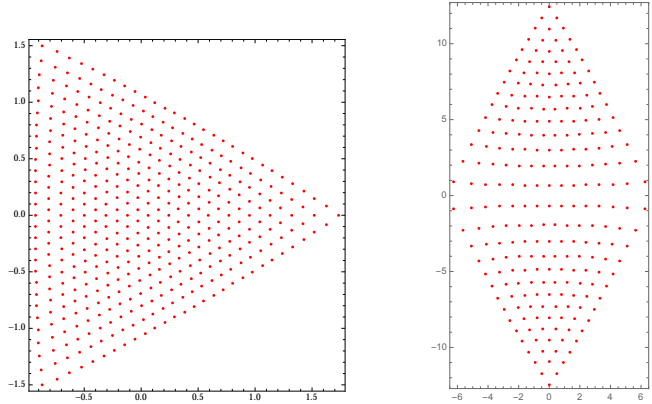

\begin{center}
\includegraphics[scale=0.35]{Triangle.pdf} \hskip1cm
 \includegraphics[scale=0.35]{Rombus15.pdf} 
\end{center} 

\caption{Level crossings for the quasi-exactly solvable quartic (left) and sextic (right), see \cite{ShTaQu} and \cite{ShT}.}
\label{figbla}
\end{figure}

  \medskip
  Taking this circumstance into account,  in \cite{ShZa1} we considered the problem of  finding the distribution of level crossings within the framework  of the random matrix theory  and studied   the case when $A$ is a fixed matrix while $B$ is a matrix distributed according to one of the standard  Gaussian ensembles.  (To the best of our knowledge, for the first time similar approach  has been used  in  \cite{ZVW}.  For general information on the random matrix theory see e.g. \cite{AGZ}.)

\medskip
  The present paper being a sequel of \cite{ShZa1}, discusses    level crossings   
in linear matrix families of the form \eqref{pencil}, where both  $A$ and $B$ are  independent and equally distributed matrices belonging to a certain class of  complex,  real, real orthogonal or unitary Gaussian ensembles.   To stress the equal r\^ole of matrices in \eqref{pencil}, we denote them here by $A$ and $B$ as opposed to $V_0$ and $H$ in  \cite{ShZa1}. 
  (A somewhat similar situation,  when one  randomly samples  coefficients of a bivariate polynomial instead of the entries of a matrix has been earlier considered in \cite{GP}.)

\medskip
We start with complex Gaussian ensembles. 
Recall that the complex (non-symmetric) Gaussian ensemble $GE_n^\bC$  is the distribution on the space $Mat_n^\bC$ of all complex-valued $n\times n$-matrices,   where each entry of a random $n\times n$-matrix is an independent complex Gaussian variable distributed as $N(0,\frac{1}{2})+i N(0,\frac{1}{2})$.

\medskip 
Our first result is as follows. 

\begin{theorem}\label{th:GE} For any positive integer $n$, if the matrices $A$ and $B$ are independently chosen from $GE_n^\bC,$ then the distribution of  level crossings  in \eqref{pencil} with respect to  the affine coordinate $\la=x+iy$ of $\bC$  is given by 
\begin{equation}\label{densGE}
\mathcal P_{{GE}_n^\bC}(\la):=\mathcal P_{{GE}_n^\bC}(x,y)dxdy=\frac{dxdy}{\pi(1+x^2+y^2)^2}=\frac{dxdy}{\pi(1+|\la|^2)^2}.
\end{equation}
\end{theorem}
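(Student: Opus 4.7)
The plan is to exploit a hidden $SO(3)$ symmetry of the problem, which pins down $\mathcal P_{{GE}_n^\bC}$ up to a scalar as the Fubini--Study area on $\bCP^1\cong S^2$; the remaining normalisation will then be immediate.

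\textbf{Step 1: $U(2)$-invariance of the joint law of $(A,B)$.} The joint density is proportional to $\exp\bigl(-\sum_{i,j}(|A_{ij}|^2+|B_{ij}|^2)\bigr)$, so for each index $(i,j)$ the pair $(A_{ij},B_{ij})\in\bC^2$ is a standard complex Gaussian vector whose law is $U(2)$-invariant. Taking products over $(i,j)$, the joint law of $(A,B)$ on $\M_n^\bC\times\M_n^\bC$ is invariant under the entry-wise ``mixing'' action
\[(A,B)\longmapsto \bigl(u_{11}A+u_{12}B,\;u_{21}A+u_{22}B\bigr),\qquad U=(u_{kl})\in U(2).\]

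\textbf{Step 2: Descent to $SO(3)$ on $\bCP^1_\la$.} A direct computation shows that the transformed pencil $A'+\la'B'$ is a non-zero scalar multiple of the original $A+\la B$ precisely when
\[\la=\frac{u_{12}+u_{22}\la'}{u_{11}+u_{21}\la'},\]
and multiplying a pencil by a non-zero scalar does not move its level crossings. Hence this M\"obius transformation carries the crossings of $A+\la B$ bijectively to those of $A'+\la'B'$. The induced homomorphism $U(2)\to\mathrm{PGL}_2(\bC)$ has image equal to the maximal compact subgroup $\mathrm{PSU}(2)\cong SO(3)$, i.e.\ the rotation group of $\bCP^1\cong S^2$. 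Combined with Step~1, the random configuration of level crossings on $\bCP^1_\la$ has $SO(3)$-invariant law, and therefore so does its one-point intensity $\mathcal P_{{GE}_n^\bC}$.

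\textbf{Step 3: Identify the measure and normalise.} Up to a scalar, the unique $SO(3)$-invariant measure on the round sphere is the Fubini--Study area, which in the affine chart $\la=x+iy$ reads $\frac{dx\,dy}{(1+|\la|^2)^2}$. Using $\int_{\bC}\frac{dx\,dy}{(1+|\la|^2)^2}=\pi$ together with the normalisation $\int_{\bC}\mathcal P_{{GE}_n^\bC}=1$ yields exactly the formula \eqref{densGE}.

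\textbf{Anticipated main obstacle.} The symmetry argument fixes the \emph{shape} of $\mathcal P_{{GE}_n^\bC}$ essentially for free, so the only delicate point is the normalisation convention. Read as the law of a single (marked) crossing the constant is $1$; read as a Kac--Rice expected-count density one would need the total expected number of branch points of the spectral cover $\{\det(\mu I-A-\la B)=0\}\to\bCP^1_\la$, which a Riemann--Hurwitz count gives as $n(n-1)$ for generic pencils. Beyond this bookkeeping, no further computation is required past the $U(2)\rightsquigarrow SO(3)$ reduction in Steps~1--2; the slightly subtle point to check carefully is that $SO(3)$-invariance of the crossings point-process implies $SO(3)$-invariance of the one-point intensity, which follows once the point process is set up in the usual way.
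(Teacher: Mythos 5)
Your argument is correct and is essentially the paper's own proof: the paper likewise exploits the $SU_2$ (equivalently $U(2)$, which descends to the same $PSU(2)\cong SO(3)$ rotations of $\bCP^1$) mixing action $(A,B)\mapsto(uA+vB,-\bar vA+\bar uB)$, checks it preserves the Gaussian law and the spectral determinant, and concludes by uniqueness of the rotation-invariant probability measure on $\bCP^1$ that the crossing distribution is the normalised Fubini--Study area, i.e.\ \eqref{densGE}. Your entry-wise $U(2)$-invariance and the scalar-rescaling/M\"obius observation are just a repackaging of the paper's trace identity and Lemma~\ref{lm:preserv}, and your normalisation (total mass $1$, with $n(n-1)$ crossings counted via the degree of the discriminant) matches the paper's convention.
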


\medskip
\begin{remark} In polar coordinates $(r,\theta)$ in the complex plane of parameter $\la$, the above distribution $\mathcal P_{{GE}_n^\bC}(\la)$ has the form
$$ 
\mathcal P_{{GE}_n^\bC}(r,\theta)drd\psi=\frac{rdrd\theta}{\pi(1+r^2)^2},
$$
giving the radial CDF of the form 
$$\Psi_{{GE}_n^\bC}(r)=\frac{r^2}{1+r^2}.$$
\end{remark}

\medskip
\begin{remark} \label{rmk2}
Let us realize $\bCP^1 \simeq  S^2$ as the unit sphere in $\bR^3$ with coordinates $(X,Y,Z)$ and identify  the complex plane of parameter $\la=x+iy$ with the horizontal coordinate $(X,Y)$-plane, where $X$ corresponds to the real axis  and $Y$ corresponds to the imaginary axis in $\bC$. If we  use the standard stereographic projection of the unit sphere in $\bR^3$ from its north pole, i.e., from the point $(0,0,1)$ onto the $(X,Y)$-plane, then the usual area element of the sphere induced from the standard Euclidean structure in $\bR^3$ is given by
$$dA=\frac{4dxdy}{(1+x^2+y^2)^2}=\frac{4dxdy}{(1+|\la|^2)^2}.$$

The latter fact implies that the r.h.s. of \eqref{densGE} presents the constant density $\frac{1}{4\pi}$  with respect to the standard Euclidean area measure on $S^2\simeq \bC P^1$  compactifying the complex plane of parameter $\la$.  (The constant density  $\frac{1}{4\pi}$ provides the unit sphere with the total mass $1$.) 
 \end{remark}
 
 \begin{remark} 
 Observe that    formula \eqref{densGE} is independent of the size of  $A$ and $B$ (and  also of the variance of the matrix ensemble, if we allow to change it). For $n=1$,  formula \eqref{densGE} gives the distribution of the quotient of two independent  complex Gaussian random variables.  
\end{remark}

\begin{figure}

\begin{center}
 \includegraphics[scale=0.4]{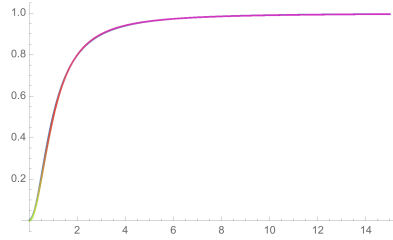} 
\end{center}

\caption{Radial density of  level crossings for $A+\la B,$ where $A$ and $B$ are independently sampled from $GE_{6}^\bC$; (we used  100 random pairs). The above diagram shows a perfect match of the numerical distribution of the absolute values of  level crossings obtained in our sampling  with the theoretical radial CDF  $\frac{r^2}{1+r^2}$.}
\label{fig1}
\end{figure}

A number of further generalizations of  Theorem~\ref{th:GE} can be found in  \S~\ref{sec2}.

\medskip
Next we consider    Gaussian orthogonal,  Gaussian unitary, and real Gaussian  ensembles.  
Recall that 

\smallskip
\noindent
 (i) the  Gaussian orthogonal ensemble  $GOE_n^\bR$  is the distribution on the space  $Sym_n^\bR$ of real-valued symmetric matrices,  where each entry $e_{i,j}=e_{j,i},\; i < j$ of a matrix is an independent random variable  distributed as  $N(0,1)$, and each diagonal entry $e_{i,i}$ is independently distributed as $\sqrt{2} N(0,1)$;  

\smallskip
\noindent
 (ii) the Gaussian unitary ensemble  $GUE_n$-ensemble  is the distribution on the space $\HH_n$ of all Hermitian $n\times n$-matrices, where each entry $e_{i,j}=e_{j,i},\; i < j$ of a matrix is an independent random variable  distributed as  $N(0,\frac{1}{2})+i N(0,\frac{1}{2})$, and each diagonal entry $e_{i,i}$ is independently distributed as $N(0,1)$;

\smallskip
\noindent 
(iii) the real (non-symmetric) Gaussian ensemble  $GE_n^\bR$   is the distribution on the space $Mat_n^\bR$ of real-valued $n\times n$ matrices,  where each entry of a matrix is an independent real random variable distributed as $N(0,1)$.  

\medskip
In the case of $GOE_n^\bR$ we have a theoretical result for $n=2$ and  a conjecture for $n\ge 3$ based on computer simulations. 

\begin{theorem}\label{th:GOE} If the matrices $A$ and $B$ are independently chosen  from $GOE_2^\bR$,  then the distribution of  level crossings in \eqref{pencil} is uniform on $\bCP^1 \supset\bC,$ i.e., their density is given by the right-hand side of \eqref{densGE}.   
\end{theorem}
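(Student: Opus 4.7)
The plan is to reduce the statement directly to the $n=1$ case of Theorem~\ref{th:GE}, exploiting the fact that a $2\times 2$ symmetric pencil has an explicitly solvable discriminant.

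\smallskip
\noindent\textbf{Step 1: Write out the discriminant.}
Parametrize
$$
A=\begin{pmatrix} a_1 & a_2 \\ a_2 & a_3\end{pmatrix},\qquad B=\begin{pmatrix} b_1 & b_2 \\ b_2 & b_3\end{pmatrix},
$$
and observe that the discriminant of a $2\times 2$ real symmetric matrix $\left(\begin{smallmatrix} p & q \\ q & r\end{smallmatrix}\right)$ equals $(p-r)^2+4q^2$. Hence level crossings of $C=A+\la B$ are exactly the zeros of
$$
\Delta(\la)=\bigl((a_1-a_3)+\la(b_1-b_3)\bigr)^2+4\bigl(a_2+\la b_2\bigr)^2.
$$

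\smallskip
\noindent\textbf{Step 2: Factor over $\bC$.}
Set $u=a_1-a_3$, $w=2a_2$, $v=b_1-b_3$, $s=2b_2$. Since $\Delta$ is a sum of two squares, it factors as
$$
\Delta(\la)=\bigl[(u+iw)+\la(v+is)\bigr]\bigl[(u-iw)+\la(v-is)\bigr],
$$
so the two level crossings are
$$
\la_{\pm}=-\frac{u\pm i w}{v\pm i s}.
$$
The two roots are complex conjugates of each other; in particular they have the same distribution, and it suffices to find the distribution of $\la_+$.

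\smallskip
\noindent\textbf{Step 3: Identify the law of $(u+iw,\,v+is)$.}
By the definition of $GOE_2^\bR$, the diagonal entries $a_1,a_3,b_1,b_3$ are $N(0,2)$ and the off-diagonal entries $a_2,b_2$ are $N(0,1)$, all independent. Therefore
$$
u\sim N(0,4),\quad w=2a_2\sim N(0,4),\quad v\sim N(0,4),\quad s=2b_2\sim N(0,4),
$$
again jointly independent. The coincidence of these four variances is exactly the content of the ``$\sqrt{2}$'' normalization on the diagonal, matched with the coefficient $4$ in front of $q^2$ in the discriminant; this is the one place where the particular structure of $GOE$ is essential, and it is the main (and arguably only nontrivial) observation of the proof. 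Consequently $Z_1:=u+iw$ and $Z_2:=v+is$ are two independent isotropic complex Gaussians, i.e.\ independent samples from $GE_1^\bC$ up to an irrelevant common rescaling.

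\smallskip
\noindent\textbf{Step 4: Invoke Theorem~\ref{th:GE} for $n=1$.}
By Theorem~\ref{th:GE} applied to $n=1$ (equivalently Remark~4 following it), the ratio $-Z_1/Z_2$ of two independent $GE_1^\bC$ variables has density $\tfrac{dxdy}{\pi(1+|\la|^2)^2}$. Since this is also the density of $\la_-=\overline{\la_+}$, the level crossing distribution of $A+\la B$ is the uniform measure on $\bCP^1\simeq S^2$ given by the right hand side of \eqref{densGE}, as claimed.

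\smallskip
There is essentially no obstacle beyond Step~3: the miracle is that the GOE covariance makes $Z_1,Z_2$ genuinely \emph{isotropic} complex Gaussians. For $n\ge 3$ the discriminant is no longer a sum of two squares and no such linear factorization exists, which is consistent with the fact that the higher-$n$ analogue is only stated as a conjecture.
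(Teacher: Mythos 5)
Your proof is correct, and it takes a genuinely different route from the paper's. The paper first uses orthogonal invariance of $GOE_2^\bR$ to diagonalize $A$ as $\mathrm{diag}(0,\Delta)$, computes the conditional law of the crossing point given the eigenvalue gap $\Delta$ (via the explicit roots \eqref{eq:tau}, a change to the variables $\Sigma,\Theta$ and their $\chi$-type densities), and then integrates against the $GOE_2$ joint eigenvalue density to get $\frac{1}{\pi(1+|\la|^2)^2}$. You avoid diagonalization and conditioning altogether: writing the discriminant as the sum of two squares $\bigl((a_1-a_3)+\la(b_1-b_3)\bigr)^2+4(a_2+\la b_2)^2$ and factoring it over $\bC$ reduces everything to the law of $-Z_1/Z_2$ for the two independent isotropic complex Gaussians $Z_1=u+iw$, $Z_2=v+is$ — and your Step 3 correctly isolates the one place where the $GOE$ normalization matters, namely that $\mathrm{Var}(a_1-a_3)=\mathrm{Var}(2a_2)$. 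Two small points of bookkeeping: strictly speaking Theorem~\ref{th:GE} at $n=1$ is vacuous as a level-crossing statement (a $1\times 1$ pencil has no crossings), so what you are really invoking is the quotient-of-independent-complex-Gaussians fact recorded in the paper's remark after Theorem~\ref{th:GE} (your remark number is off by one), or equivalently the $U(2)$-invariance of the isotropic Gaussian on $\bC^2$, which makes $(Z_1:Z_2)$ uniform on $\bCP^1$; and since the crossing distribution weights both roots equally, one should note that $\la_-=\overline{\la_+}$ has the same (conjugation-invariant) law, which you do. What each approach buys: yours is shorter and more conceptual, and makes transparent why the answer is exactly uniform on $\bCP^1$ for $GOE_2$; the paper's conditional-on-$\Delta$ computation yields more detailed intermediate information (the conditional densities given the gap) and is the template that carries over to the $GUE_2$ and $GE_2^\bR$ cases, where the discriminant is no longer a sum of two squares and your linear factorization is unavailable — consistent with your closing observation about $n\ge 3$.
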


\begin{remark} One can easily check that the distribution of  level crossings for $A$ and $B$ independently taken from $GOE_1^\bR$ is uniform on the real projective line $\bR P^1$. 
\end{remark}

Extensive numerical experiments strongly support the following guess illustrated in Fig.~\ref{figGO}. 
\begin{conjecture}\label{conj:GOE}  For any fixed size $n>2$, if the matrices $A$ and $B$ are independently chosen  from $GOE_n^\bR$,  then the distribution of  level crossings in \eqref{pencil} is uniform on $\bCP^1 \supset\bC.$
\end{conjecture}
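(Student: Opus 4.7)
The natural framework is a $\partial\bar\partial$-analysis of the random polynomial whose roots are the level crossings. Let $D(\la):=\mathrm{Disc}_\mu\det(\mu I-A-\la B)$, a polynomial in $\la$ of degree $n(n-1)$; by the Poincar\'e--Lelong formula, the expected density of its zeros is
$$
\rho(\la)=\tfrac{1}{\pi}\,\partial_\la\bar\partial_{\bar\la}\phi(\la),\qquad \phi(\la):=\mathbb E\bigl[\log|D(\la)|^2\bigr].
$$
Since $\tfrac{1}{\pi}\partial\bar\partial\log(1+|\la|^2)$ equals the Fubini--Study density and $D$ has $n(n-1)$ roots, Conjecture~\ref{conj:GOE} is equivalent to the identity $\phi(\la)=n(n-1)\log(1+|\la|^2)+C$ for some constant $C$.

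The symmetry input is the joint $O(n)\times O(2)$-invariance of $(A,B)$: for $P\in O(n)$ and $M=\bigl(\begin{smallmatrix}\alpha&\beta\\\gamma&\delta\end{smallmatrix}\bigr)\in O(2)$, the substitution $(A,B)\mapsto(P(\alpha A+\beta B)P^T,\,P(\gamma A+\delta B)P^T)$ preserves the joint law. Writing $A'+\la B'=(\alpha+\gamma\la)\,P(A+T(\la)B)P^T$ with the M\"obius substitution $T(\la)=(\beta+\delta\la)/(\alpha+\gamma\la)$, one obtains $D_{\text{new}}(\la)=(\alpha+\gamma\la)^{n(n-1)}D(T(\la))$, and taking expectations together with the $O(2)$-identity $1+|T(\la)|^2=(1+|\la|^2)/|\alpha+\gamma\la|^2$ shows that $G(\la):=\phi(\la)-n(n-1)\log(1+|\la|^2)$ is $O(2)$-invariant on $\bCP^1$. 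The orbits of the induced $O(2)\subset SO(3)$-action on $S^2\cong\bCP^1$ are level sets of $|Y(\la)|$, where $Y(\la)=2\,\mathrm{Im}(\la)/(1+|\la|^2)$ is the coordinate along the rotation axis through the two fixed points $\pm i$. Hence $G(\la)=F(Y(\la)^2)$ for some $F\colon[0,1]\to\bR$, and the conjecture reduces to showing $F\equiv\mathrm{const}$.

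To pin down $F$ I would evaluate $\phi$ directly along the imaginary axis $\la=it$, $t\ge 0$, which sweeps out every value of $|Y|\in[0,1]$. The endpoint analysis at $t=0$ and $t=\infty$ gives $F(0)=\mathbb E[\log|\mathrm{disc}(A)|^2]$—a classical Selberg-type integral computable from the GOE eigenvalue density—and the conjecture predicts the matching value at $|Y|=1$, namely $\mathbb E[\log|\mathrm{disc}(A+iB)|^2]=n(n-1)\log 2+\mathbb E[\log|\mathrm{disc}(A)|^2]$, with $A+iB$ a complex symmetric matrix whose real and imaginary parts are independent $GOE_n^\bR$. For general $t$, one would diagonalize $A=P\mathrm{diag}(a)P^T$, invoke $O(n)$-invariance to replace $P^TBP$ by an independent $B'\sim GOE_n^\bR$, and express $\mathbb E[\log|\mathrm{disc}(\mathrm{diag}(a)+itB')|^2]$ as a joint integral over the GOE eigenvalue density in $a=(a_i)$ and the Gaussian entries of $B'$.

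The principal obstacle is the rigidity of the real ensemble. For the complex ensemble of Theorem~\ref{th:GE} the Gaussian weight is $U(2)$-invariant rather than merely $O(2)$-invariant, which promotes the M\"obius action to the transitive $PU(2)\cong SO(3)$ and delivers uniformity automatically. Over $\bR$ only $O(2)$ survives and no further linear symmetry of a pair of real symmetric Gaussian matrices appears available, so $F$ really has to be computed. For $n\ge 3$ the explicit $n=2$ factorization $\la_\pm=-\alpha/\beta,\,-\bar\alpha/\bar\beta$ underlying Theorem~\ref{th:GOE} has no direct analogue, because the $n-2$ ambient eigenvalues of $\mathrm{diag}(a)+itB'$ interact with the near-crossing $2\times 2$ sub-block through higher-order perturbation theory, and the joint Kac--Rice integrand becomes genuinely high-dimensional. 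The cleanest alternative route—identifying a hidden $SU(2)$-symmetry on the spectral variety $\{\det(\mu I-A-\la B)=0\}\subset\bC^2$ that would restore a transitive $SO(3)$-action on $\bCP^1$—would, if available, make the conjecture a formal corollary of Theorem~\ref{th:GE}; the sharpness of the numerical match in Fig.~\ref{figGO} makes such a hidden mechanism feel plausible, though its source remains elusive.
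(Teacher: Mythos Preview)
The statement you are attempting to prove is listed in the paper as a \emph{conjecture}, not a theorem: the authors give no proof for $n>2$ and explicitly remark that, although the simple conjectural answer ``indicates the possible existence of some extra symmetry complementing the $SO_2$-action,'' they were ``not able to find such.'' So there is no paper proof to compare against.

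Your reduction is correct and is essentially the Poincar\'e--Lelong reformulation of the paper's own framework in \S\ref{sec3}. The $O(2)$-equivariance you derive for $\phi(\la)-n(n-1)\log(1+|\la|^2)$ is exactly the paper's Lemma~\ref{lm:azimuth} and Proposition~\ref{prop:vital} (rotation about the $Y$-axis), together with the evenness in $Y$ that the paper also notes; your conclusion $G(\la)=F(Y(\la)^2)$ is equivalent to their formula~\eqref{eq:form}. So up to this point you and the paper are doing the same thing in different notation.

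The genuine gap is the one you yourself name: you do not show that $F$ is constant. You outline two possible routes---a direct computation of $\phi(it)$ via a joint GOE/Gaussian integral, or the discovery of a hidden $SU_2$-type symmetry promoting the $O(2)$-action to a transitive one---but carry out neither. The first route is not obviously tractable: for $n\ge 3$ the discriminant of $\mathrm{diag}(a)+itB'$ is a polynomial of degree $n(n-1)$ in the $\binom{n}{2}+n$ entries of $B'$ and the $n$ variables $a_i$, and the expectation of its logarithm over the GOE eigenvalue density in $a$ and the Gaussian in $B'$ has no evident closed form or factorization analogous to the $n=2$ computation in \S\ref{sec4}. The second route is exactly the missing mechanism the authors could not locate. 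Your proposal is therefore a sound strategic outline that isolates the correct remaining obstruction, but it does not close it; the statement remains open.
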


\begin{remark} Notice that on Fig.~\ref{figGO}  one can hardly see the difference  between the statistical results for $n=2,4,6,8,10$ and the theoretical CDFs of the uniform distribution on $\bCP^1$.  
 Although the simple (conjectural) answer for  level crossing distribution in the $GOE$-case presented in Theorem~\ref{th:GOE} and Conjecture~\ref{conj:GOE}  indicates the possible existence of some extra symmetry complementing the  $SO_2$-action presented in \S~\ref{sec3}, we were not able to find such.
\end{remark}



 \begin{figure}
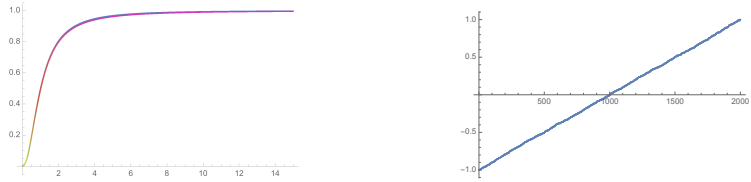

\begin{center}
\includegraphics[scale=0.3]{GOE2.pdf} \hskip 2cm \includegraphics[scale=0.3]{AGOE2.pdf}   

\end{center}
\caption{Numerical and theoretical radial and angle CDFs for \eqref{pencil} for $n=2,4,6,8,10$  with $A$ and $B$  taken from $GOE_n$ are practically indistinguishable.}
\label{figGO}
\end{figure}

\medskip
Our next results deal with   Gaussian unitary ensembles. Here again we have a theoretical result for $n=2$ and numerical plots for higher $n$.

\begin{theorem}\label{th:GUE}  If the matrices $A$ and $B$ are independently chosen from $GUE_2,$ then the distribution of   level crossings in $\bC$ is given by 

\begin{equation}\label{densGUE2}
\mathcal P_{GUE_2}(x,y)dxdy=  \frac{4|y|dxdy}{\pi(1+x^2+y^2)^3}=\frac{1}{\pi}\left \vert \frac{y}{1+x^2+y^2}\right \vert \frac{4dxdy}{(1+x^2+y^2)^2},
\end{equation}
which matches the general formula~\eqref{eq:form}. 



\end{theorem}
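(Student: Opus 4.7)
The plan is to reduce the level crossings of $A+\la B$ to the zeros of a concrete quadratic in $\la$ whose coefficients carry a transparent $SO(3)$-structure on an auxiliary $\bR^3$.

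First I would parametrize each matrix from $GUE_2$ by its four independent real parameters: for $A$, diagonal entries $a_1,a_4\sim N(0,1)$ and off-diagonal entry $a_2+ia_3$ with $a_2,a_3\sim N(0,\tfrac12)$, and analogously $(b_1,\ldots,b_4)$ for $B$. Since $\la\in\bC$, the entry $M_{21}$ of $M:=A+\la B$ is \emph{not} the complex conjugate of $M_{12}$, but level crossings are still the roots of the discriminant $D(\la)=(M_{11}-M_{22})^2+4M_{12}M_{21}$ of the characteristic polynomial of $M$. A direct expansion collapses this into
\begin{equation*}
D(\la)=|\tilde a|^2+2\la\,\tilde a\cdot\tilde b+\la^2|\tilde b|^2,\qquad
\tilde a=(a_1-a_4,\,2a_2,\,2a_3),\ \tilde b=(b_1-b_4,\,2b_2,\,2b_3)\in\bR^3,
\end{equation*}
and the key structural observation is that $\tilde a$ and $\tilde b$ are then independent isotropic Gaussian vectors in $\bR^3$ with common component variance $2$. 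The two roots are therefore
\[
\la_\pm=\frac{-\tilde a\cdot\tilde b\,\pm\, i\,|\tilde a\times\tilde b|}{|\tilde b|^2}.
\]

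Next I would invoke the diagonal $SO(3)$-action $(\tilde a,\tilde b)\mapsto(R\tilde a,R\tilde b)$: it preserves the joint law and leaves all of $\tilde a\cdot\tilde b$, $|\tilde a\times\tilde b|$, $|\tilde b|^2$ invariant, so, after conditioning on $\tilde b$, I may rotate it onto the positive first axis. Writing $\tilde a=\sqrt 2(X_1,X_2,X_3)$ and $|\tilde b|=\sqrt 2\,r$ with $X_i\sim N(0,1)$ i.i.d.\ and $r$ chi-distributed with three degrees of freedom (density $\sqrt{2/\pi}\,r^2e^{-r^2/2}$), the two crossings become
\[
\la_\pm=\frac{-X_1\pm i\sqrt{X_2^2+X_3^2}}{r}.
\]
The root with positive imaginary part lives in $\{y>0\}$; the change of variables $x=-X_1/r$, $y=\sqrt{X_2^2+X_3^2}/r$ has Jacobian $r^2$, and combined with the Rayleigh density $se^{-s^2/2}$ for $s=\sqrt{X_2^2+X_3^2}$ gives the conditional density $\tfrac{r^3y}{\sqrt{2\pi}}e^{-r^2(x^2+y^2)/2}$.

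The final step is a one-dimensional integral: multiplying by the chi density and using the elementary identity $\int_0^\infty r^5 e^{-ar^2/2}\,dr=8/a^3$ with $a=1+x^2+y^2$ yields $8y/\pi(1+x^2+y^2)^3$ on $y>0$. By the $y\leftrightarrow-y$ symmetry, the root with negative imaginary part produces the mirror image on $y<0$; dividing by $2$ to normalize (there being two crossings per realization) reproduces \eqref{densGUE2}. Matching with \eqref{eq:form} is then immediate, since the product factorizes as $\tfrac{1}{\pi}\tfrac{|y|}{1+x^2+y^2}\cdot\tfrac{4\,dx\,dy}{(1+x^2+y^2)^2}$, i.e.\ an anisotropic factor times the uniform measure on $\bCP^1$ of Remark~\ref{rmk2}. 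The only genuine content lies in the discriminant identity together with the recognition that the resulting vectors in $\bR^3$ are isotropic Gaussians; everything downstream is $SO(3)$-bookkeeping and one Gamma integral.
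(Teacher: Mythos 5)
Your proof is correct: the discriminant identity $D(\lambda)=|\tilde a|^2+2\lambda\,\tilde a\cdot\tilde b+\lambda^2|\tilde b|^2$ checks out, the components of $\tilde a,\tilde b$ are indeed i.i.d.\ $N(0,2)$, and the chi$_3$/Rayleigh/Gaussian reduction with the Gamma integral $\int_0^\infty r^5e^{-ar^2/2}dr=8/a^3$ reproduces $8y/\pi(1+x^2+y^2)^3$ on $y>0$ and, after symmetrization and the factor $1/2$ per crossing pair, exactly \eqref{densGUE2}. However, your route is genuinely different from the paper's. The paper exploits unitary invariance of $GUE_2$ to diagonalize $A$, shifts to $A=\mathrm{diag}(0,\Delta)$, writes the crossing as $\lambda=1/(\Sigma-i\Theta)$ with $\Sigma=(b_{11}-b_{22})/\Delta$ and $\Theta=2|b_{12}|/\Delta$, computes the \emph{conditional} density $\PP^\Delta(\lambda)$, and only then averages against the known joint eigenvalue density of $GUE_2$ with its Vandermonde repulsion factor $(\alpha_2-\alpha_1)^2$. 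You instead treat $A$ and $B$ on an equal footing through the traceless-part (Pauli-vector) map into $\bR^3$, so that everything reduces to two independent isotropic Gaussian vectors and the diagonal $SO(3)$-action; this is essentially the technique the paper reserves for $GE_2^\bR$ in Section~5 (where the quadratic form is Minkowski rather than Euclidean), transplanted to the Hermitian case. What each buys: your argument is more self-contained and elementary (no eigenvalue density or level repulsion input, no conditioning on the gap), and it makes the $SO(3)$-invariance structurally transparent; the paper's route, by conditioning on $\Delta$, produces the intermediate conditional law $\PP^\Delta(\lambda)$, which is more detailed information of independent interest and runs in parallel with the $GOE_2$ computation and with the method of the predecessor paper. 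One cosmetic caution: when you say you ``rotate $\tilde b$ onto the first axis after conditioning,'' it is worth stating explicitly that this is legitimate because $\lambda_\pm$ depends on $(\tilde a,\tilde b)$ only through the rotation invariants $\tilde a\cdot\tilde b$, $|\tilde a\times\tilde b|$, $|\tilde b|$, and the law of $\tilde a$ is isotropic and independent of $\tilde b$; as written this step is correct but compressed.
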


In the cylindrical coordinates $(\psi,Y)$ on $\bC P^1$, where $0\le \psi \le 2\pi$ and $-1\le Y \le 1$,  one has 
\begin{equation}\label{eq:GUE2}
\mathcal P_{GUE_2}(\psi,Y) d \psi dY=\frac{|Y| d\psi dY}{2\pi}.
\end{equation}






 
 \begin{figure}
\begin{center}
\includegraphics[scale=0.65]{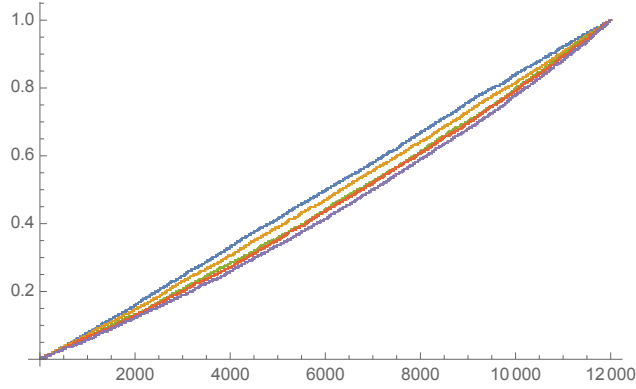} 
\end{center}
\caption{Empirical distributions of  $|Y|$ for \eqref{pencil}�  taken from $GUE_n$ with $n=2,3,4,5,6$. (Curves corresponding to the increasing values of $n$  lie one below the other; the blue straight line corresponds to $n=2$, see \eqref{eq:GUE2}.)}
\label{fig23}
\end{figure}





 
 \medskip
At present, we do not have  explicit (even conjectural) formulas for the densities $\mathcal P_{{GUE}_n}(x,y)$,  for $n\ge 3$ similar to  \eqref{eq:GUE2}. But  we carried out substantial numerical experiments for matrix sizes up to $6$ conducted as follows. For each $n \in \{2, . . . , 6\},$  sampling  independently pairs of $GUE_n$-matrices, we calculated 12,000 level crossing points for every $n$ 
  and plotted the  values of $|Y|$ for   obtained level crossings in
increasing order, see Fig.~\ref{fig23}.  These numerical experiments  strongly suggest the following.

\begin{conjecture}\label{conj:main} There exists a limiting distribution $ \mathcal P_{GUE_\infty}(Y):=\lim_{n\to \infty} \mathcal P_{GUE_n}(Y)$.
\end{conjecture}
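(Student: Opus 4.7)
\smallskip\noindent
\textbf{Proof plan.}
Since $[-1,1]$ is compact, the family $\{\mathcal P_{GUE_n}(Y)\,dY\}$ is automatically tight, so only the identification of a unique subsequential limit is at stake. My plan is to produce, for each $n$, a Kac--Rice-type integral representation of $\mathcal P_{GUE_n}$ in which the $n$-dependence is concentrated in a $GUE_{n-2}$ spectral average, and then to invoke large-$n$ universality for GUE to pass to the limit.

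A point $\lambda_0 \in \bC$ is a crossing iff there exist $\mu_0 \in \bC$ and orthonormal $u, v \in \bC^n$ satisfying the Jordan relations $(A + \lambda_0 B)u = \mu_0 u$ and $(A + \lambda_0 B)v = u + \mu_0 v$. Treating these as defining equations for the crossing locus and computing the Jacobian of the change of variables $(A, B) \leftrightarrow (\lambda_0, \mu_0, u, v, \mathrm{rest})$, Kac--Rice yields (after dividing by the expected crossing count $n(n-1)$)
$$
\mathcal P_{GUE_n}(\lambda_0) = \frac{1}{n(n-1)}\int \mathbb E\bigl[J_n(\lambda_0,\mu_0,u,v)\bigr]\,du\,dv\,d\mu_0.
$$
The $U(n)$-conjugation invariance of the joint law of $(A,B)$ then lets me rotate $(u,v)$ to $(e_1,e_2)$ at the cost of the volume of the complex Stiefel manifold $V_2(\bC^n)$. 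The Jordan relations become explicit linear Gaussian conditions on the first two rows and columns of $A$ and $B$; integrating those out leaves the bottom-right $(n-2)\times(n-2)$ blocks as an independent $GUE_{n-2}$-pair $(A', B')$. Switching to cylindrical coordinates and integrating over $\psi$ and $\mu_0$ should produce a factored formula of the shape
$$
\mathcal P_{GUE_n}(Y)\;=\;\int F(Y,\psi,\mu_0)\,\mathbb E_{(A',B')\sim GUE_{n-2}}\!\bigl[G(A',B';Y,\psi,\mu_0)\bigr]\,d\psi\,d\mu_0,
$$
where $F$ is a universal factor that, for $n=2$ (no residual block, $G\equiv 1$), should recover \eqref{eq:GUE2}, and $G$ is a rational function of the spectrum of the non-Hermitian matrix $A' + \lambda_0 B' - \mu_0 I$ (typically a ratio of determinants/resolvents).

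For the final step, in the large-$n$ limit the rescaled matrix $A'+\lambda_0 B'$ converges in $*$-distribution to a linear combination of two free semicircular elements, whose moments are computable by free convolution. I would upgrade this, via the Harish-Chandra--Itzykson--Zuber integral (to decouple eigenvalues from eigenvectors) together with Plancherel--Rotach asymptotics and concentration of linear statistics for GUE, to convergence of $\mathbb E[G]$ to an explicit $Y$-dependent limit. Combining with the $n$-independent factor $F$ then defines $\mathcal P_{GUE_\infty}(Y)$ and gives pointwise, in fact uniform, convergence on $[-1,1]$ by dominated convergence.

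The hardest step is the last one: the integrand $G$ contains determinants of the non-Hermitian perturbation $A'+\lambda_0 B' - \mu_0 I$, and obtaining uniform $n$-control, particularly near the spectral edge of the free sum and near the real axis of $\mu_0$ where these determinants can degenerate, is the main analytic obstruction. A secondary nuisance is the explicit bookkeeping of the Kac--Rice Jacobian under the unitary reduction, which should be tractable but tedious; a useful sanity check along the way is that the $n=2$ specialization must match Theorem~\ref{th:GUE}.
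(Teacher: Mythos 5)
First, a point of reference: the statement you are addressing is not proved in the paper at all. It is stated as a conjecture, supported only by the numerical experiments of Fig.~\ref{fig23} (12{,}000 crossings per matrix size, $n=2,\dots,6$), so there is no proof of the authors' to compare yours against. The question is therefore only whether your proposal itself constitutes a proof, and it does not: it is a research programme whose decisive steps are left open, several of them by your own admission.

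Concretely, the gaps are the following. (1) Your opening reduction is too quick: tightness on $[-1,1]$ only gives subsequential weak limits, and the conjecture asserts the existence of a single limiting distribution (indeed of a limiting density in $Y$); identifying a unique limit, or proving convergence of the densities $\rho_n(Y)$ rather than of the measures, is precisely the content of the conjecture and is not delivered by your plan. (2) The Kac--Rice representation with the Jordan relations $(A+\lambda_0 B)u=\mu_0 u$, $(A+\lambda_0 B)v=u+\mu_0 v$ is plausible, but the Jacobian of the change of variables and the claimed exact factorisation into a universal factor $F$ times a $GUE_{n-2}$-average of $G$ are only asserted ``of the shape''; nothing guarantees that the conditioning on the first two rows and columns leaves a clean independent $GUE_{n-2}$ block with an integrand depending on it only through quantities amenable to asymptotics. (3) Most seriously, the limiting step is not a consequence of the tools you invoke. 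Convergence of $A'+\lambda_0 B'$ in $*$-distribution to a sum of free semicircular elements controls moments of polynomials, not expectations of determinants, ratios of characteristic polynomials, or resolvents evaluated at $\mu_0$ near the limiting spectrum; such expectations typically require their own (delicate, often exponentially sensitive) asymptotics, and no uniform integrability or domination is provided, so the appeal to dominated convergence is unsupported. You flag this yourself as ``the main analytic obstruction,'' which is an accurate self-assessment: until that step is carried out, the proposal establishes nothing beyond what the paper already offers as numerical evidence. The programme is a reasonable one to pursue (and the $n=2$ consistency check against Theorem~\ref{th:GUE} is a sensible anchor), but as written it is not a proof of the conjecture.
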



\medskip
Our final results deal with the case when $A$ and $B$ are independently taken from the  $GE_n^\bR$-ensemble. Theoretical results are available for $n=2$ as well as  an explicit general conjecture about the asymptotics of level crossings when $n\to \infty$.   
The next statement describes the distribution of the coefficients of the random real quadratic discriminantal polynomial whose roots are the two level crossing points $(\la_+, \la_-)$ in the  situation when $A$ and $B$ independently taken from the  $GE_2^\bR$-ensemble.

\medskip
\begin{proposition}\label{prop:GE2Rparts} \rm{(i)} The density of the average of the two level crossing points $(\la_+, \la_-)$  with respect to the Lebesgue measure on the real axis is given by  the following single integral
\begin{align}
\rho_{\frac{\lambda_+ + \lambda_-}{2}}(x) 
&= \int_{-1}^{1} \frac{\left| t\right|}{\pi\sqrt{2-2 t} \left(x ^2 t^2+1\right)^2} dt, 
\end{align}
where $x\in \bR$.

\medskip\noindent
\rm{(ii)} 
 The density of the product of the two level crossing points $(\la_+, \la_-)$  with respect to the Lebesgue measure on the real axis is given by 
\begin{align}
\begin{split}
\rho_{\frac{D_A}{D_B}}(x) &= \Theta(x)\left[\frac{1}{2(x+1)^2} -\int_{-\infty}^0 \frac{y e^{-y(1+x)/2}}{8}\text{erfc}(\sqrt{-y})\text{erfc}(\sqrt{-xy})dy\right] \\
 &+ \Theta(-x) \left[ \frac{1}{(x+1)^2}\left(1+\frac{3x-1 + (x-3) \sqrt{-x}}{\sqrt{8} (1-x)^{3/2}}\right)\right],
\end{split}
\end{align}
where $x\in \bR$ and $\text{erfc}(t)$ stands for the standard complementary error function given by $\text{erfc}(t)=\frac{2}{\sqrt{\pi}}\int_t^\infty e^{-\tau^2}d\tau.$
\end{proposition}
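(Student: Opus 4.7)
The starting point for both parts is to express $\la_\pm$ as the roots of the discriminant of $A+\la B$. For any $2\times 2$ matrix $M$ the eigenvalues coincide iff $(\mathrm{tr}\,M)^2-4\det M=0$; combining this with the $2\times 2$ identity $\det(A+\la B)=\det A+\la\bigl(\mathrm{tr}(A)\mathrm{tr}(B)-\mathrm{tr}(AB)\bigr)+\la^2\det B$ shows that $\{\la_+,\la_-\}$ is the zero set of
\begin{equation*}
D(\la)=D_B\,\la^2+2\bigl(2\,\mathrm{tr}(AB)-\mathrm{tr}(A)\mathrm{tr}(B)\bigr)\la+D_A.
\end{equation*}
Decomposing $A=\tfrac12\mathrm{tr}(A)\,I+A_0$ and $B=\tfrac12\mathrm{tr}(B)\,I+B_0$ into scalar and traceless parts, the scalar contributions cancel and Vieta's formulas yield $(\la_++\la_-)/2=\mathrm{tr}(A_0 B_0)/(2\det B_0)$ and $\la_+\la_-=\det A_0/\det B_0$, so both claims reduce to statements about the traceless parts alone.

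\medskip
\noindent
For part (i) I would condition on $B_0=\bigl(\begin{smallmatrix}\be&b_2\\b_3&-\be\end{smallmatrix}\bigr)$. With $A_0=\bigl(\begin{smallmatrix}\al&a_2\\a_3&-\al\end{smallmatrix}\bigr)$, where $\al\sim N(0,\tfrac12)$ and $a_2,a_3\sim N(0,1)$ are independent, $\mathrm{tr}(A_0B_0)=2\al\be+a_2b_3+a_3b_2$ is a centered Gaussian combination of conditional variance $2\be^2+b_2^2+b_3^2$, so $(\la_++\la_-)/2\mid B_0$ is $N(0,\sigma^2)$ with $\sigma^{-2}=4(\det B_0)^2/(2\be^2+b_2^2+b_3^2)$. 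Rescaling $\xi:=\sqrt2\,\be$ makes $(\xi,b_2,b_3)$ a standard Gaussian on $\bR^3$. In spherical coordinates $(\xi,b_2,b_3)=R\vec u$ the radius $R$ and the direction $\vec u$ are independent, and the rotation $(u_2,u_3)\mapsto(w_2,w_3):=\bigl((u_2+u_3)/\sqrt2,(u_2-u_3)/\sqrt2\bigr)$ simplifies the denominator via $\xi^2+2b_2b_3=R^2(1-2w_3^2)$. Setting $t:=1-2w_3^2\in[-1,1]$ gives $\sigma^{-1}=R|t|$ with $R$ and $t$ independent. By Archimedes' theorem $w_3$ is uniform on $[-1,1]$, so $t$ has density $1/(2\sqrt{2-2t})$, while $R$ has density $\sqrt{2/\pi}\,R^2 e^{-R^2/2}$. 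Averaging the conditional Gaussian kernel first against $R$ using $\int_0^\infty R^3 e^{-cR^2}dR=1/(2c^2)$ with $c=(x^2t^2+1)/2$ produces $2/[\pi(x^2t^2+1)^2]$; integrating against the density of $t$ then yields the single-integral formula of (i).

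\medskip
\noindent
For part (ii) I would first determine the density of $D_A$ itself. From $D_A=(a_1-a_4)^2+4a_2a_3$ and the rotations $p=(a_1-a_4)/\sqrt2$, $(q,r)=\bigl((a_2+a_3)/\sqrt2,(a_2-a_3)/\sqrt2\bigr)$, one obtains $D_A/2=p^2+q^2-r^2\stackrel{d}{=}\chi_2^2-\chi_1^2$ with independent summands, and a short convolution gives
\begin{equation*}
f_{D_A}(u)=\frac{e^{-u/4}}{4\sqrt2}\cdot\begin{cases}1,&u\ge0,\\ \mathrm{erfc}(\sqrt{-u/2}),&u<0.\end{cases}
\end{equation*}
Independence of $A$ and $B$ then gives $\rho_{D_A/D_B}(z)=\int_\bR|w|\,f_{D_A}(zw)f_{D_A}(w)\,dw$, which splits into four pieces by the signs of $w$ and $zw$. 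For $z>0$ the ``both positive'' piece is an elementary exponential integral yielding $1/[2(1+z)^2]$, while the ``both negative'' piece, after $w\mapsto-y$, is precisely the error-function integral in (ii). For $z<0$ only the two mixed-sign pieces survive, each of the form $\int_0^\infty w\,e^{-aw}\,\mathrm{erfc}(\sqrt{bw})\,dw$ with $a+b>0$, and both can be evaluated in closed form by differentiating in $a$ the standard identity $\int_0^\infty e^{-aw}\,\mathrm{erfc}(\sqrt{bw})\,dw=\frac{1}{a}\bigl(1-\sqrt{b/(a+b)}\bigr)$. The main obstacle is the final algebraic repackaging: after specialising to $(a,b)=((1-z)/2,-z)$ and $(a,b)=((1-z)/2,1)$, one must verify that the resulting mixture of rational functions together with terms in $\sqrt{-z}$ and $(1-z)^{3/2}$ telescopes exactly into the closed form $\frac{1}{(z+1)^2}\bigl(1+\frac{3z-1+(z-3)\sqrt{-z}}{\sqrt8\,(1-z)^{3/2}}\bigr)$; every other step---the discriminant rewriting, the conditional reduction and $SO(3)$-symmetry argument of (i), the convolution producing $f_{D_A}$, and the case $z>0$ of (ii)---is direct Gaussian calculation.
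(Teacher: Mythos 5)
Your proposal is correct and follows essentially the same route as the paper: your reduction to the traceless parts is the paper's Pauli-vector/Minkowski setup in different notation, part (i) is the same conditional-Gaussian computation in spherical coordinates (uniform cosine by Archimedes, $\chi_3$ radius, integrate out $R$), and part (ii) is the same $\chi_2^2-\chi_1^2$ convolution for the discriminant density followed by the sign-split ratio integral, with the mixed-sign pieces evaluated, as you propose, by differentiating the Laplace transform of $\mathrm{erfc}$. The only blemishes are cosmetic bookkeeping slips that a full write-up would catch: with your normalization of $D_A$ the printed $\mathrm{erfc}$-integral is reached by the rescaling $w=2y$ (not $w\mapsto-y$), and in the $z<0$ pieces the correct parameters are $a=\pm(1+z)/2$, $b=-z$ resp.\ $b=1$ (your $(1-z)/2$ is $a+b$, not $a$); with these values your identity indeed reproduces the two closed forms and their sum telescopes to the stated expression.
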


For the actual distribution of the level crossings on the complex $\la$-plane we were only able to obtain the following complicated claim.   

\begin{proposition}\label{prop:GE2R}
\rm{(i)} For $\la=x+iy$ and $y\neq 0$, the distribution of level crossings of \eqref{pencil} with $A$ and $B$ independently taken from the  $GE_2^\bR$-ensemble is given by the  triple integral: 
 \begin{align}
\begin{split} \PP_{GE_2^\bR}(x,y)dxdy = \int_{-\infty}^\infty da \int_0^\infty dr \int_{-\infty}^\infty db \cdot e^{-\frac{r^2+ b^2+((r^2 - b^2)(x^2 + y^2) + \left(\frac{ar + x(r^2 - b^2)}{b}\right)^2 }{2}}\\
 \cdot \left|\frac{yr}{2\pi^2 b}(r^2 - b^2)^2\right| \cdot \frac{ \Theta\left({(r^2 - b^2)(x^2 + y^2) + \left(\frac{a r + x(r^2 - b^2)}{b}\right)^2 - a^2}\right)}{\sqrt{(r^2 - b^2)(x^2 + y^2) + \left(\frac{a r + x(r^2 - b^2)}{b}\right)^2 - a^2}}dxdy,
 \end{split}
 \end{align}
where $\Theta$ is the Heaviside $\Theta$-function, i.e. $\Theta (t)=0$ for $t<0$ and $\Theta (t)=1$ for $t>0$.

\medskip\noindent
\rm{(ii)} 
\begin{equation}\label{eq:realaxis}
\PP_{GE_2^\bR}(x,0)dxdy=\frac{\sqrt{2}}{\pi}\frac{dx\delta y}{(1+x^2)^2}.
\end{equation}

\end{proposition}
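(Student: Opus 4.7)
The plan is to apply the Kac--Rice type identity
\[
\sum_i \delta^{(2)}(\lambda-\lambda_i) \;=\; |D'(\lambda)|^2 \,\delta^{(2)}(D(\lambda))
\]
for the roots of the discriminant polynomial $D(\lambda)=D_A+2D_{AB}\lambda+D_B\lambda^2$, where $D_A=(a_{11}-a_{22})^2+4a_{12}a_{21}$, $D_B=(b_{11}-b_{22})^2+4b_{12}b_{21}$, and $D_{AB}=(a_{11}-a_{22})(b_{11}-b_{22})+2(a_{12}b_{21}+a_{21}b_{12})$: the coincident-eigenvalue locus of $A+\lambda B$ is exactly the zero set of $D$. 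Splitting $D(x+iy)=0$ into real and imaginary parts and evaluating $|D'|^2$ on the support of the resulting delta functions yields
\[
\PP_{GE_2^\bR}(x,y) \;=\; 2|y|\,\mathbb E\!\left[D_B^{\,2}\,\delta\!\big(D_A-(x^2+y^2)D_B\big)\,\delta\!\big(D_{AB}+xD_B\big)\right].
\]
The traces $T_A=a_{11}+a_{22}$ and $T_B=b_{11}+b_{22}$ do not enter $D_A,D_B,D_{AB}$ and integrate out trivially, reducing the expectation to a six-fold Gaussian integral in $\alpha=a_{11}-a_{22}$, $u=a_{12}$, $v=a_{21}$, $\beta=b_{11}-b_{22}$, $p=b_{12}$, $q=b_{21}$, with $\alpha,\beta\sim N(0,2)$ and $u,v,p,q\sim N(0,1)$.

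For part (i), one first uses the linear delta $\delta(D_{AB}+xD_B)$ to solve for $\alpha$ (contributing the factor $1/|\beta|$) and substitutes the result into $D_A-(x^2+y^2)D_B$. In the coordinates $\xi=u+v$, $\eta=u-v$ the resulting constraint becomes a quadratic equation in $\xi$ whose discriminant factorises as $D_B$ times a nonnegative sum of squares; real solutions therefore exist only when $D_B\ge 0$, which simultaneously forces $r\ge 0$ and generates the Heaviside factor $\Theta(E-a^2)$ and the $1/\sqrt{E-a^2}$ in the stated formula. The two sign choices $\xi=\pm\xi_*$ contribute equally by the evenness of the remaining Gaussian factor in $\xi$. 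One then identifies $r=\sqrt{D_B}\ge 0$, $b$ proportional to $\beta$, and an auxiliary variable $a$ parameterising the remaining degree of freedom (after the $SO(2)$-conjugation invariance of the joint Gaussian measure has been used to align one further direction of $B$) so as to produce the claimed triple integral. The main obstacle is combinatorial rather than conceptual: pinning down the explicit change of variables $(a,r,b)$ in terms of the matrix entries and tracking all Jacobians, absolute values and normalisations so that the prefactor assembles exactly into $|yr(r^2-b^2)^2/(2\pi^2 b)|$ and the argument of $E$ takes the displayed form.

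For part (ii), the density on the real axis follows from the one-dimensional Kac--Rice formula $\PP_{\mathrm{real}}(x)=\mathbb E[|D'(x)|\,\delta(D(x))]$ together with the clean factorisation
\[
D(x)=(\alpha+x\beta)^2+4(u+xp)(v+xq),\qquad D'(x)=2\beta(\alpha+x\beta)+4q(u+xp)+4p(v+xq),
\]
which suggests the shear $\tilde\alpha=\alpha+x\beta$, $\tilde u=u+xp$, $\tilde v=v+xq$ (of unit Jacobian). Under this substitution $\delta(D(x))=\delta(\tilde\alpha^2+4\tilde u\tilde v)$ depends only on the tildes, and conditional on these tildes the triple $(\beta,p,q)$ remains Gaussian with means proportional to the tildes and respective conditional variances $2/(1+x^2)$, $1/(1+x^2)$, $1/(1+x^2)$. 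Consequently $D'(x)$ is conditionally Gaussian with mean proportional to $D(x)$, vanishing on the support of the delta, and variance an explicit quadratic form in the tildes; its conditional absolute value therefore averages to $\sigma_{D'}\sqrt{2/\pi}$. After rescaling to standard normals $s_1,s_2,s_3$ the outer integral reduces to an elementary integral of $|s|$ over the quadric $\{s_1^2+2s_2s_3=0\}\subset\R^3$, which evaluates in closed form to the Cauchy-type density on the real axis stated in~\eqref{eq:realaxis}.
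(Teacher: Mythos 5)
Your overall strategy (a delta-function/Kac--Rice computation over the Gaussian entries) is viable, and the one piece of algebra you do carry out is correct: the discriminant of the quadratic in $\xi$ does factor as $D_B/\beta^2$ times the sum of squares $(\eta-x(q-p))^2+y^2\bigl((p+q)^2+\beta^2\bigr)$, and the resulting restriction to $D_B\ge 0$ is consistent with the stated formula, since there one has $E-a^2=(r^2-b^2)\,\frac{(a+xr)^2+y^2b^2}{b^2}$, so $\Theta(E-a^2)=\Theta(r^2-b^2)$ for $y\neq 0$. But part (i) of the Proposition \emph{is} the explicit triple integral, and you stop exactly where that formula would have to be produced: you yourself defer ``pinning down the explicit change of variables $(a,r,b)$ \dots and tracking all Jacobians, absolute values and normalisations''. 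Moreover, the identification you hint at, $r=\sqrt{D_B}$, cannot be the one that matches the statement, because in the stated integrand $D_B$ corresponds to $r^2-b^2$, not to $r^2$. In the paper's derivation $A$ and $B$ are written in the Pauli basis, $\vec B$ is rotated (using invariance of the Gaussian measure under $SO(2)$-conjugation) to $(r_B,b_-,0)$, so that $r=r_B$ is the Euclidean norm of the $(b_+,b_\Delta)$-part (hence $\chi_2$-distributed, giving the factor $re^{-r^2/2}$), $b=b_-$, and $a$ is the component of $\vec A$ along the rotated direction; the $\Theta(\cdot)$ and $1/\sqrt{\cdot}$ do not arise from positivity of your $\xi$-discriminant but from resolving a delta against $a_2^2$, the square of a standard normal, whose density is $\Theta(s)e^{-s/2}/\sqrt{2\pi s}$. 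So the passage from your six-fold Gaussian integral to the stated triple integral --- which is the actual content of (i) --- is missing, and the change of variables you gesture at would not land on the stated integrand without substantial further (unperformed) work.

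For part (ii) your route is genuinely different from the paper's, which gets \eqref{eq:realaxis} in two lines: by the $SO_2$-invariance of Section 3 (formula \eqref{eq:form}) the real-axis density must be proportional to $(1+x^2)^{-2}$, and Lemma~\ref{lm:sqrt2} fixes its total mass to $1/\sqrt{2}$, hence the constant $\sqrt{2}/\pi$. A direct Kac--Rice computation along your lines could in principle work, but as written it is also unfinished: the final ``elementary integral over the quadric $\tilde\alpha^2+4\tilde u\tilde v=0$'' is asserted rather than evaluated, and you never address the normalisation convention. The intensity $\mathbb E\bigl[|D'(x)|\,\delta(D(x))\bigr]$ integrates to the expected number of real crossings, which is $\sqrt{2}$, whereas the measure in the Proposition is normalised per crossing pair and puts total mass $1/\sqrt{2}$ on the real axis; without tracking this factor of $2$ (and the analogous one hidden in your $2|y|\,\mathbb E[\cdots]$ formula for (i)) you cannot certify the constants $\sqrt{2}/\pi$ and $\bigl|\frac{yr}{2\pi^2 b}(r^2-b^2)^2\bigr|$ that the statement asserts. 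The paper's symmetry-plus-mass argument for (ii) is both shorter and immune to precisely these constant-tracking pitfalls, and I would recommend adopting it, or else carrying your conditional-Gaussian computation through to an explicit closed form before claiming the result.
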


\medskip
It seems really difficult to  get any explicit formulas for the distributions of  level crossings of $GE_n^\bR$ with $n\ge 3$,  
  but  as in the previous cases, we performed detailed numerical experiments illustrated in Fig.~\ref{fig3} and \ref{fig3e}. These experiments strongly suggest the validity of the following guess to which we plan to return in  \cite{GrShZa3}.  

\begin{conjecture}\label{conj:imp} 
 When $n\to \infty$, the  level crossing distribution for $A$ and $B$ independently sampled from $GE_n^\bR$ 
approaches  the uniform distribution on $\bCP^1$.
\end{conjecture}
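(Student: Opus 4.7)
The plan is to reduce Conjecture~\ref{conj:imp} to the already established uniformity in the complex case (Theorem~\ref{th:GE}) via two steps: an exact finite-$n$ symmetry reduction, followed by an asymptotic comparison between $GE_n^\bR$ and $GE_n^\bC$.

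First I would exploit the $O(2)$-symmetry of the joint distribution of $(A,B)$ with $A,B$ i.i.d.\ from $GE_n^\bR$ (cf.\ the discussion in \S~\ref{sec3}). Since the Gaussian weight $\exp\bigl(-\tfrac12\mathrm{tr}(A^T A+B^T B)\bigr)$ is preserved by the orthogonal action $(A,B)\mapsto(aA+bB,cA+dB)$ for $\bigl(\begin{smallmatrix}a&b\\c&d\end{smallmatrix}\bigr)\in O(2)$, and the level crossings of the transformed pencil are obtained from those of $A+\la B$ via the Möbius map $\la\mapsto(d\la+b)/(c\la+a)$, the density $\mathcal P_{GE_n^\bR}(\la)$ is invariant under the one-parameter subgroup of Möbius transformations fixing $\la=\pm i$. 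In the stereographic model of Remark~\ref{rmk2} this is precisely rotation of $S^2=\bCP^1$ about the $Y$-axis, so $\mathcal P_{GE_n^\bR}$ depends only on the single coordinate $Y=2y/(1+x^2+y^2)\in[-1,1]$. It therefore suffices to show that the $Y$-marginal converges to the uniform distribution on $[-1,1]$.

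Next I would decompose the level-crossing measure into its singular part supported on the real axis $\bR P^1\subset\bCP^1$ (cf.\ part (ii) of Proposition~\ref{prop:GE2R}) and its absolutely continuous part on $\bCP^1\setminus\bR P^1$. The singular part is supported on collisions of real-eigenvalue pairs (or on merging complex-conjugate pairs at real $\la$), so its total mass is bounded by the expected number of real level crossings; by a Kac--Rice computation parallel to the Edelman--Kostlan--Shub count of real eigenvalues of a single $GE_n^\bR$ matrix I expect this to grow like $O(\sqrt n)$, hence to be negligible against the $n(n-1)$ total crossings once normalised to a probability measure. For the absolutely continuous part I would invoke a real-to-complex universality of Ginibre-type ensembles: away from $\bR P^1$, in the bulk of the spectral disk $|z|\le\sqrt n$, the local two-point correlations of the spectrum of $A+\la B$ should coincide in the $n\to\infty$ limit with those of the complex pencil, and since the level-crossing density is essentially such a coincidence density, it should converge pointwise to $\frac{1}{\pi(1+|\la|^2)^2}$ from Theorem~\ref{th:GE}, which together with the $O(2)$-invariance of Step~1 forces the limiting density to be identically uniform on $\bCP^1$.

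The main obstacle is the universality transfer in the last step: standard results on Ginibre universality concern the eigenvalues of a single random matrix, whereas here one needs joint universality for the spectrum of the pencil $A+\la B$, equivalently for the bivariate characteristic polynomial $\det(A+\la B-\mu I)$ regarded as a random function of $(\la,\mu)$. I expect to treat this either by (a) a direct Kac--Rice/supersymmetry computation of $\mathcal P_{GE_n^\bR}$ as an integral over the matrix entries, along the lines of \cite{ShZa1}, followed by a large-$n$ stationary-phase analysis, or by (b) a four-moment-type interpolation between the real and complex pencils, reducing to Theorem~\ref{th:GE}. Both approaches will be pursued in \cite{GrShZa3}.
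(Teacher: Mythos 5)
The statement you are addressing is not proved in the paper at all: Conjecture~\ref{conj:imp} is offered there only with numerical evidence (Fig.~\ref{fig3}, Fig.~\ref{fig3e}) and is explicitly deferred to \cite{GrShZa3}. Your Step~1 is correct and is in fact already established in \S~\ref{sec3}: Lemma~\ref{lm:azimuth}(c) and Proposition~\ref{prop:vital} show that the $GE_n^\bR$ level-crossing measure is invariant under the $SO_2$-action, hence depends only on the coordinate $Y$, so the problem indeed reduces to showing that the $Y$-marginal becomes uniform on $[-1,1]$ as $n\to\infty$. But from that point on your proposal is a plan, not a proof, and it reduces one open conjecture to two other unproven claims.

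Concretely: (i) your bound on the singular part supported on $\bR P^1$ rests on an Edelman--Kostlan--Shub/Kac--Rice count giving $O(\sqrt{n})$ real crossings out of $n(n-1)$. The real crossings here are the real zeros of the discriminant $Dsc(\la)$ of the pencil, a degree-$n(n-1)$ polynomial whose coefficients are strongly dependent functionals of $(A,B)$; the paper itself points out that the known ``square root of the degree'' results for real roots do not cover this situation, and the only rigorous case is $n=2$ (Lemma~\ref{lm:sqrt2}). (ii) The ``real-to-complex universality'' transfer is the genuinely missing idea. Bulk universality for Ginibre-type ensembles concerns microscopic eigenvalue correlations of a single matrix at scale $n^{-1/2}$, whereas $\mathcal P_{GE_n^\bR}(\la)$ is a macroscopic density of zeros of $Dsc(\la)$, i.e.\ a Kac--Rice-type density for an exact eigenvalue degeneracy of $A+\la B$ weighted by a Jacobian involving the $\la$-dependence; no existing universality theorem identifies this quantity across the real and complex ensembles, and the identification ``level-crossing density $\approx$ coincidence density of the spectrum at fixed $\la$'' is itself only heuristic. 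Note also that the exact uniformity in Theorem~\ref{th:GE} comes from the $SU_2$-symmetry of the complex ensemble, which has no analogue over $\bR$, so pointwise convergence to $\frac{1}{\pi(1+|\la|^2)^2}$ cannot be inherited by symmetry and must be proved analytically; even granted pointwise convergence of the density away from $Y=0$, you would still need a tightness/uniform-integrability argument to conclude weak convergence on $\bCP^1$. As it stands, your proposal is a reasonable research programme (essentially options (a)--(b) you list, which is what the authors themselves defer to \cite{GrShZa3}), but it does not settle the conjecture.
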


\begin{figure}
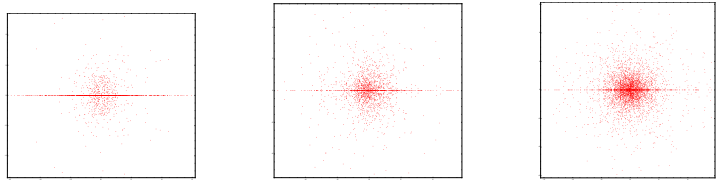


\begin{center}

\includegraphics[scale=0.07]{OE_2.pdf}\hskip1cm \includegraphics[scale=0.07]{OE_5.pdf}\hskip1cm \includegraphics[scale=0.065]{OE_10.pdf}

\end{center}


\caption{Distributions of   level crossings in the $\la$-plane when $A$ and $B$ are sampled from $GE^\bR_n$ for $n=2, 5, 10$ apparently approaching  the uniform distribution on $\bC P^1$.}
\label{fig3}
\end{figure}

\bigskip
\begin{figure}
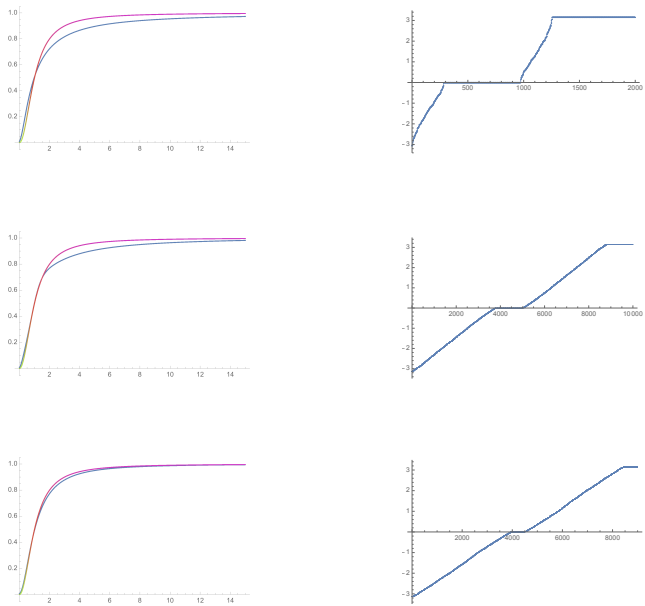
\label{FigFig}

\begin{center}

\includegraphics[scale=0.25]{GER2.pdf}\hskip2cm\includegraphics[scale=0.25]{AGER2.pdf} 
\vskip1cm 
\includegraphics[scale=0.25]{GER5.pdf}\hskip2cm\includegraphics[scale=0.25]{AGER5.pdf} 
\vskip1cm
\includegraphics [scale=0.25]{GER10.pdf}\hskip2cm\includegraphics[scale=0.25]{AGER10.pdf}

\end{center}


\caption{Radial and angle distributions of  level crossings with $A$ and $B$ sampled from $GE^\bR_n$ with $n=2, 5, 10$ approaching that of  the uniform distribution on $\bC P^1$. (The limiting theoretical radial density is shown by the magenta line and the experimental results by the blue line.)}
\label{fig3e}
\end{figure}

We have also numerically evaluated the number of real level crossings among the total number of level crossings. (Real level crossing are represented by the horizontal segments of the graphs in the right column of Fig.~\ref{fig3e}.) Our numerics suggests that for a given size $n$, the average number of real level crossings is close to $\sqrt{n(n-1)}$ which is the square root of the total number of level crossings (given by $n(n-1)$). (Observe that  in many similar situations involving real random univariate polynomials it is known that the average  number of real roots equals the square root of their degree. Unfortunately, our situation is not covered by the known theoretical results.) We can prove that $\sqrt{2}$ is the expected average for $n=2$, see Lemma~\ref{lm:sqrt2}. For $n=3,4,5$ with 10000 samples, the quotient of the empirical average divided by $\sqrt{n(n-1)}$ was $1.0405, 1.0404, 1.04957$ resp. For $n=6$ with 5000 samples, the same quotient was  $1.05586$ and, finally, for $n=10$ with 130 samples, the quotient was  $1.06382$.


\medskip
The structure of the paper is as follows. In \S~\ref{sec2} we prove some general introductory  results and make conclusions  about the complex Gaussian ensembles. In \S~\ref{sec3}, we discuss the $SO_2$-action on $\bC P^1$ and Gaussian ensembles.  In \S~\ref{sec4}, we consider the cases of orthogonal Gaussian ensembles and Gaussian unitary ensembles and  settle Theorems~\ref{th:GOE} and ~\ref{th:GUE}. In \S~\ref{sec55}, we  settle Propositions~\ref{prop:GE2Rparts}  and \ref{prop:GE2R} for the real Gaussian ensemble. Finally,  in \S~\ref{sec6}, we  present interesting numerical results about the monodromy  statistics of $3\times 3$ linear families \eqref{pencil}.

\medskip 
\begin{ack} B.S. wants to thank the Department of Mathematics of UIUC and, especially,  Professor Y.~Baryshnikov for the hospitality in June-July 2015 when a part of this project was carried out. 
The research of  K.Z. was supported by the Marie 
Curie network GATIS of the European Union's FP7 Programme under REA Grant
Agreement No 317089, by the ERC advanced grant No 341222, by the Swedish Research Council (VR) grant
2013-4329, and by RFBR grant 18-01-00460.  
\end{ack}

\section{$SU_2$-action and complex Gaussian  ensembles}\label{sec2}

To prove our results about complex Gaussian ensembles, we  need the following construction.  The $GE_n^\bC$-probability measure $\ga:=\ga_{GE_n^\bC}$ on $Mat_n^\bC$ induces the product probability measure  $\ga^{(2)}$ on $Mat_n^\bC\times Mat_n^\bC$. Consider the {\em spectral determinant} $\mathfrak D_n\subset Mat_n^\bC\times Mat_n^\bC\times \bC,$  which is a complex algebraic hypersurface consisting of all triples $(A,B,\la)$ such that the matrix $A+\la B$ has a multiple eigenvalue. 
Projection $\pi_n: \mathfrak D_n\to Mat_n^\bC\times Mat_n^\bC$ by forgetting the last coordinate induces a branched covering of $Mat_n^\bC\times Mat_n^\bC$ by $\mathfrak D_n$ of degree $n(n-1)$ whose fiber over a pair $(A,B)$ coincides with  level crossing set  of the linear family $A+\la B$. Taking the pullback $\pi^{-1}_n(\ga^{(2)}),$ we obtain the probability measure $\Ga:=\Ga_{GE_n^\bC}$ on $\mathfrak D_n$.  (In other words,  for any open subset $\mathcal O\subset \mathcal D_n$ which projects diffeomorphically on its image, $\Ga(\mathcal O)=\frac{1}{n(n-1)}\ga(\pi_n(\mathcal O))$. Similar construction can be used for any branched covering whose base is equipped with an arbitrary probability measure.)  

Now let $\kappa_n: \mathfrak D_N \to \bC$ be the projection of the spectral determinant onto the last coordinate in $Mat_n^\bC\times Mat_n^\bC\times \bC$, i.e., onto the $\la$-plane. Then the measure $\mu:=\mu_{GE_n^\bC}$ we are looking for, coincides with  the pushforward $\mu:=\kappa_n(\pi^{-1}_n(\ga^{(2)}))$.  (In other words, the value of measure $\mu$ on any measurable subset of $\bC$ equals the value of measure $\Ga$ of its complete preimage in $\mathcal D_n$.) 

\smallskip
For our purposes, it will be more convenient to consider the space $Mat_n^\bC\times Mat_n^\bC\times \bCP^1,$ with the inclusion $\bC\subset \bCP^1$ given by the stereographic projection introduced in Remark~\ref{rmk2}.  In other words, we use $\la:=b/a$, $(a:b)$ being the homogeneous coordinates on $\bCP^1$. The above constructions work equally well on $Mat_n^\bC\times Mat_n^\bC\times \bCP^1$ and provide us with  the measure $\mu$ supported on $\bCP^1$.   (By a slight abuse of notation we denote both measures by the same letter.) 

\smallskip  \medskip
  Consider the following  $SU_2$-action on   $Mat_n^\bC\times Mat_n^\bC$. A matrix  $\frak U \in SU_2$ given by $\begin{pmatrix} u & -\bar v \\ v&\bar u \end{pmatrix},\;$ $|u|^2+|v|^2=1$ acts on the latter product space  by: 
\begin{equation}\label{eq:rest}
(A,B) \ast \frak U \mapsto (u  A+v  B, -\bar v A+ \bar u  B).
\end{equation} 

Consider the following  $SU_2$-action on   $Mat_n^\bC\times Mat_n^\bC\times \bCP^1$ extending the above action \eqref{eq:rest}. 
\newline
A matrix $\mathfrak U= \begin{pmatrix} u & -\bar v \\ v&\bar u \end{pmatrix},\;$ $|u|^2+|v|^2=1$  acts on $Mat_n^\bC\times Mat_n^\bC\times \bCP^1$ by: 
\begin{equation}\label{eq:action}
 (A,B,a:b) \ast \mathfrak U \; \mapsto (u  A+v  B, -\bar v A+ \bar u  B, \bar u  a +\bar v b: -v a +u b).
\end{equation}
  Observe that the third component of the latter action coincides with  the standard $SU_2$-action   on a point $(a:b)\in \bCP^1$ of the conjugate matrix $\begin{pmatrix} \bar u & - v \\ \bar v& u \end{pmatrix}\;$.   

\medskip
To prove Theorem~\ref{th:GE} stated in the Introduction, we will  show that $\mu$ is invariant under the above $SU_2$-action on $\bCP^1.$ Since this action preserves the standard Fubini-Study metric on $\bC P^1,$ we can conclude  that its density is constant with respect to the area form induced by the Fubini-Study metric, i.e.,  the one which has constant density in the cylindrical coordinates $(\phi, Z)$. 

\smallskip
Our proof of Theorem~\ref{th:GE}  consists of three steps.  On step 1 we will show that the action \eqref{eq:action}  on $Mat_n^\bC\times Mat_n^\bC\times \bCP^1 $ preserves the spectral determinant   
$\widehat {\mathfrak D}_n\subset Mat_n^\bC\times Mat_n^\bC\times \bCP^1$. On step 2 we will prove that this action preserves the probability measure $\ga^{(2)}$ on $Mat_n^\bC\times Mat_n^\bC$. As a consequence of steps 1 and 2, it also preserves the probability measure $\pi_n^{-1}(\ga^{(2)})$ on $\widehat {\mathfrak D}_n$.  On step 3 we will show the equivariance of \eqref{eq:action} with respect to the projections $\pi_n$ and $\kappa_n$.

\begin{lemma}\label{lm:preserv}
The action \eqref{eq:action} preserves $\widehat {\mathfrak D}_n\subset Mat_n^\bC\times Mat_n^\bC\times \bCP^1$.
\end{lemma}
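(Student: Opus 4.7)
The plan is to identify $\widehat{\mathfrak{D}}_n$ with the zero locus of the spectral discriminant in homogeneous coordinates and then reduce the invariance statement to the single algebraic identity $a'A' + b'B' = aA + bB$, where $(A',B',(a':b'))$ denotes the image of $(A,B,(a:b))$ under \eqref{eq:action}. The natural homogeneous description of $\widehat{\mathfrak D}_n$ is the set of triples for which the characteristic polynomial $\det(tI - (aA+bB))$ has a repeated root in $t$; on the affine chart $a\neq 0$ this recovers the original condition that $A+\lambda B$ has a multiple eigenvalue at $\lambda = b/a$, and the singular fibre $a=0$ is handled uniformly. Thus it suffices to show that the pencil $aA+bB$ is preserved by the combined action on $(A,B,(a:b))$.

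The core of the argument is a direct expansion. One writes
\[
a'A'+b'B' = (\bar u a + \bar v b)(uA+vB) + (-va+ub)(-\bar v A + \bar u B),
\]
collects the four monomials $aA$, $aB$, $bA$, $bB$, and observes that the cross terms $\bar u v\,aB$ and $-v\bar u\,aB$ cancel, as do $u\bar v\,bA$ and $-u\bar v\,bA$, while the diagonal terms combine into $(|u|^2+|v|^2)(aA+bB)$. Using $|u|^2+|v|^2=1$ gives $a'A'+b'B' = aA+bB$ as matrices. In other words, the conjugate transpose appearing in the action on $(a:b)$ is precisely what is needed for the $SU_2$-action on $(A,B)$ to be intertwined with the $SU_2$-action on the parameter $\bCP^1$.

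Since $a'A'+b'B'=aA+bB$, the two characteristic polynomials in the spectral variable $t$ coincide, and therefore so do their discriminants. Consequently $(A,B,(a:b))\in \widehat{\mathfrak D}_n$ if and only if $(A',B',(a':b'))\in \widehat{\mathfrak D}_n$, which is the asserted invariance.

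I do not anticipate a genuine obstacle: once the homogeneous reformulation of $\widehat{\mathfrak D}_n$ is accepted, the entire content is the one-line identity above. The only point requiring care is bookkeeping of the bars and signs in \eqref{eq:action}, which are arranged precisely so that the action on $(a,b)$ compensates for the action on $(A,B)$; the computation indeed goes through because the action on $(a:b)$ uses the conjugate of $\mathfrak U$, as emphasized in the text preceding the lemma.
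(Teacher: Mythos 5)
Your proof is correct and coincides with the paper's argument: both reduce the invariance of $\widehat{\mathfrak D}_n$ to the identity $(\bar u a+\bar v b)(uA+vB)+(-va+ub)(-\bar v A+\bar u B)=aA+bB$, verified by direct expansion using $|u|^2+|v|^2=1$. The extra remark about the homogeneous formulation and the chart $a=0$ is a harmless elaboration of what the paper leaves implicit.
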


\begin{proof}
Take an arbitrary triple $(A,B,a:b)$ belonging to $\widehat {\mathfrak D}_n$, i.e., such that $aA+bB$ has a multiple eigenvalue, and take any $\mathfrak U= \begin{pmatrix} u & -\bar v \\ v&\bar u \end{pmatrix}\in SU_2$.  We need to show that the triple $$(u  A+v  B, -\bar vA+ \bar u  B, \bar u  a +\bar v b: -v a +u b)$$ also belongs to 
$\widehat {\mathfrak D}_n$. In other words, we need to check  that  if  $aA+bB$ has a multiple eigenvalue, then the matrix 
$$( \bar u  a +\bar v b)(u  A+v  B)+(-v a +u b)(-\bar v A+ \bar u  B)$$ 
has a multiple eigenvalue as well. The latter claim is obvious since expanding the above expression, we get $aA+bB$.   
\end{proof}

\begin{proof}[Proof of Theorem~\ref{th:GE}] 
To settle step 2, observe that in case of the  $GE_n^\bC$-ensemble, the probability density  to obtain a  matrix $A\in Mat_n^\bC$ is given by: 
$$\ga(A)=\frac{1}{\pi^{n^2}}e^{-\sum_{i,j=1}^n|A_{ij}|^2}=\frac{1}{\pi^{n^2}}e^{-Tr(AA^\star)},$$
where $A^\star$ stands for the conjugate-transpose of $A$. Therefore the density of $\ga^{(2)}$ on $Mat_n^\bC\times Mat_n^\bC$ is given by: 
 $$\ga^{(2)}(A,B)=\frac{1}{\pi^{2n^2}}e^{-Tr(AA^\star +BB^\star)}.$$
Setting $C=uA+vB$ and $D=-\bar v  A+\bar u B$, we get the relation
 $$Tr(CC^\star +DD^\star)=Tr(AA^\star+BB^\star)=$$
$$=Tr(u\bar u A A^\star+v\bar u B A^\star +u\bar v A B^\star+v\bar v B B^\star  +v\bar v A A^\star-\bar u v B A^\star  -u\bar v A B^\star+u\bar u B B^\star ).$$
The latter equality implies that the action \eqref{eq:action} restricted to $Mat_n^\bC\times Mat_n^\bC$ (i.e., forgetting its action on the last coordinate $\bC P^1$) preserves $\ga^{(2)}$. 
By Lemma~\ref{lm:preserv},  the action \eqref{eq:action} preserves the hypersurface $ \widehat {\mathfrak D}_n$ and, therefore it preserves the probability measure $\pi_n^{-1}(\ga^{(2)})$ on it. 

To settle step 3, we need to show that the measure $\mu:=\kappa_n(\pi^{-1}_n(\ga^{(2)}))$ on $\bC P^1$ is invariant under the conjugate action of $SU_2$ on $\bC P^1$, see the last component of  \eqref{eq:action}. Take an arbitrary open set $\Omega\subset \bC P^1$ and $g\in SU_2$. Denote by $g\cdot \Omega \subset \bC P^1$ the shift of $\Omega$ by the conjugate of $g$. We need to prove that $\mu(\Omega)=\mu(g\cdot \Omega)$. By definition, $\mu(\Omega):=\pi^{-1}(\ga^{(2)})(\kappa_n^{-1}(\Omega))$ and $\mu(g\cdot \Omega):=\pi^{-1}(\ga^{(2)})(\kappa_n^{-1}(g \cdot\Omega))$. (Observe that both $\kappa_n^{-1}(\Omega)$ and $\kappa_n^{-1}(g\cdot \Omega)$ are measurable subsets of $\widehat {\mathfrak D}_n$.)  Let us show that the \eqref{eq:action}-action  by $g$ sends $\kappa_n^{-1}(\Omega)$ to  $\kappa_n^{-1}(g\cdot \Omega)$ and the  \eqref{eq:action}-action  by the inverse $g^{-1}$ sends $\kappa_n^{-1}(g\cdot \Omega)$ to  $\kappa_n^{-1}(\Omega)$ implying the required coincidence of measures due to  step 2. Indeed $\kappa_n^{-1}(\Omega)$ is the set of all triples 
$(A,B, a:b)$ such that $aA+bB$ has a multiple eigenvalue and $(a:b)\in \Omega$.  By Lemma~\ref{lm:preserv}, acting by $g$ on any such triple we get another 
triple $(\tilde A, \tilde B, \tilde a:\tilde b)$ such that $\tilde a\tilde A+\tilde b\tilde B$ has a multiple eigenvalue and $(\tilde a:\tilde b)\in g\cdot \Omega$. The same argument applies to  the \eqref{eq:action}-action  by the inverse $g^{-1}$. 
\end{proof}

\medskip

\begin{remark} 
\medskip
Observe that  an alternative way to express the fact that the r.h.s. of \eqref{densGE} presents the constant density $\frac{1}{4\pi}$  with respect to the standard Euclidean area measure on $S^2\simeq \bC P^1$  is as follows. Consider the  standard cylindrical coordinate system $(\rho, \phi, Z)$ in $\bR^3,$ where $\rho \ge 0, 0\le \phi \le 2\pi, Z\in \bR$.  Recall that  
$$X=\rho \cos \phi,\; Y=\rho \sin \phi,\; Z=Z.$$ 
 If we consider $(\phi,Z)$,  $0\le \phi \le 2\pi, -1\le Z \le 1,$ as coordinates on the unit sphere $S^2\simeq \bC P^1$ (with both poles removed), then in these coordinates the usual area element  on the sphere is given by 
$$dA=d\phi dZ.$$ Thus, in cylindrical coordinates $(\phi, Z)$, $0\le \phi \le 2 \pi;\; -1\le z \le 1$ parameterising  the unit sphere $S^2$, the measure $\mathcal P_{{GE}_n^\bC}(x,y)dxdy$ given by  \eqref{densGE} transforms into   
\begin{equation}\label{eq:const}
\mathcal P_{{GE}_n^\bC}(\phi,Z)d\phi dZ=\frac{d\phi dZ}{4\pi}.
\end{equation} 

In the case of $2\times 2$-matrices, the formula
$$
 \mathcal{P}_{GE_2^\bC}(x,y)dxdy=\frac{1}{\pi \left(1+|\la |^2\right)^2}dxdy\, 
$$
can  also be obtained by  explicit calculations with the discriminantal equation similar to those in  Sections 4 - 6. 
\end{remark}

\medskip
Let us now present a number of generalisations of Theorem~\ref{th:GE}. 

\begin{proposition}\label{prop:diag} Conclusion of Theorem~\ref{th:GE} holds, if  $A$ and $B$ are independently chosen from the scaled complex Gaussian ensemble $GE_{\si^2,n}^\bC,$ i.e., the ensemble 
whose off-diagonal entries are i.i.d. standard normal complex variables and whose diagonal entries are i.i.d.  normal complex variables with an arbitrary fixed positive  variance $\si^2$.
\end{proposition}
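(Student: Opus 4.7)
The plan is to re-run the three-step proof of Theorem~\ref{th:GE} and observe that only Step~2 (invariance of the density $\gamma^{(2)}$) depends on the specific form of the ensemble, while Steps~1 and~3 depend only on the linearity of the pencil $aA+bB$ and on the definition of $\mu$ as a pushforward. Lemma~\ref{lm:preserv} and the equivariance argument with respect to $\pi_n$ and $\kappa_n$ carry over verbatim.

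The one calculation to revisit is therefore the $SU_2$-invariance of the product density. The key observation is that the action \eqref{eq:action} restricted to $Mat_n^\bC\times Mat_n^\bC$ acts \emph{entry by entry}: for each index pair $(i,j)$ the map
\[
(A_{ij},B_{ij})\;\longmapsto\;(uA_{ij}+vB_{ij},\,-\bar v A_{ij}+\bar u B_{ij})
\]
is a unitary transformation of $\bC^2$, and in particular preserves the quantity $|A_{ij}|^2+|B_{ij}|^2$ \emph{separately for each $(i,j)$}. Consequently, for any choice of positive weights $w_{ij}$ depending only on the position, the quadratic form
\[
Q(A,B)=\sum_{i,j}w_{ij}\bigl(|A_{ij}|^2+|B_{ij}|^2\bigr)
\]
is $SU_2$-invariant. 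In the ensemble $GE_{\sigma^2,n}^\bC$ the joint density on $Mat_n^\bC\times Mat_n^\bC$ is, up to a normalising constant,
\[
\gamma^{(2)}(A,B)\;\propto\;\exp\!\left(-\sum_{i\neq j}\bigl(|A_{ij}|^2+|B_{ij}|^2\bigr)-\frac{1}{\sigma^2}\sum_{i}\bigl(|A_{ii}|^2+|B_{ii}|^2\bigr)\right),
\]
which is precisely of the form $\exp(-Q(A,B))$ with $w_{ij}=1$ off the diagonal and $w_{ii}=1/\sigma^2$. Hence $\gamma^{(2)}$ is $SU_2$-invariant.

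Combining this with the unchanged Steps~1 and~3 shows that the pushforward measure $\mu$ on $\bCP^1$ is invariant under the standard $SU_2$-action, and therefore equals the normalised Fubini--Study area measure, which in the affine chart reads $dxdy/[\pi(1+|\la|^2)^2]$. I do not anticipate a real obstacle: the whole point is that the entrywise nature of the $SU_2$-action makes the argument insensitive to per-position rescaling of the variance. In fact the same reasoning would establish the conclusion for any Gaussian ensemble whose entries are independent complex Gaussians with position-dependent (but matrix-independent) variances; the scaled-diagonal case is the minimal illustration of this wider principle.
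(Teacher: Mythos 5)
Your proposal is correct and follows essentially the same route as the paper: the paper likewise reduces everything to Step~2 and verifies that the modified density $K^2e^{-Tr(AA^\star+BB^\star)-W\sum_i(|A_{ii}|^2+|B_{ii}|^2)}$ is preserved by checking that $|A_{ii}|^2+|B_{ii}|^2=|C_{ii}|^2+|D_{ii}|^2$ for each $i$, which is exactly your observation that the action is an entrywise unitary transformation of $\bC^2$. Your remark that the argument extends to arbitrary position-dependent variances is a valid (and natural) strengthening of the same computation.
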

(In the above notation,  $GE_n^\bC=GE_{1,n}^\bC.$) 

\medskip

The next observation together with Theorem~\ref{th:GE} and Proposition~\ref{prop:diag}  �allows us to substantially extend the class of complex Gaussian ensembles whose distribution of  level crossings is  given by \eqref{densGE}, i.e., it is uniform on $\bC P^1$.

\medskip
Take any complex linear subspace $W_n\subset Mat_n^\bC$ such that the product space $W_n\times W_n\subset Mat_n^\bC\times Mat_n^\bC$ is preserved by the action \eqref{eq:rest}. Given $\si>0$, denote by $W_{\si^2,n}$ the space $W_n$ with the measure  induced from the scaled complex Gaussian ensemble $GE_{\si^2,n}^\bC$. 

\medskip
\begin{proposition}\label{prop:general}
In the above notation,  level crossings  of \eqref{pencil} with the random matrices $A$ and $B$ independently chosen from $W_{\si^2,n}$ are uniformly distributed on $\bC P^1,$ i.e., their probability measure is given by the right-hand side of \eqref{densGE}.   
\end{proposition}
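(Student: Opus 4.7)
The plan is to repeat the three-step strategy used to prove Theorem~\ref{th:GE}, but with the product measure $\ga^{(2)}$ on $Mat_n^\bC\times Mat_n^\bC$ replaced by its restriction to the invariant subspace $W_n\times W_n$. Write $\ga_W$ for the Gaussian probability measure on $W_n$ obtained from $GE_{\si^2,n}^\bC$ by restricting the covariance form, and set $\ga_W^{(2)}=\ga_W\times \ga_W$ on $W_n\times W_n$.

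The first step is to check that Lemma~\ref{lm:preserv} runs inside $W_n\times W_n\times \bCP^1$. Since $W_n\times W_n$ is preserved by the action \eqref{eq:rest}, the extended action \eqref{eq:action} restricts to $W_n\times W_n\times \bCP^1$, and the piece $\widehat{\mathfrak D}_n^W:=\widehat{\mathfrak D}_n\cap (W_n\times W_n\times \bCP^1)$ of the spectral determinant is preserved by exactly the same algebraic identity as in Lemma~\ref{lm:preserv}. The restriction of $\pi_n$ to $\widehat{\mathfrak D}_n^W$ still exhibits it as a branched cover of $W_n\times W_n$ of generic degree $n(n-1)$, so the pullback construction yielding $\pi_n^{-1}(\ga_W^{(2)})$ goes through verbatim.

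The second step is $SU_2$-invariance of $\ga_W^{(2)}$. Here I would reuse the calculation behind Proposition~\ref{prop:diag}: the density of $GE_{\si^2,n}^\bC\times GE_{\si^2,n}^\bC$ on $Mat_n^\bC\times Mat_n^\bC$ is the exponential of a quadratic form $Q(A,B)$ that splits as an $SU_2$-invariant sum of a diagonal and an off-diagonal piece (because the $SU_2$-action is entry-wise scalar and so preserves both pieces separately). Since $Q$ is $SU_2$-invariant on the ambient space and $W_n\times W_n$ is $SU_2$-stable, the restriction $Q|_{W_n\times W_n}$ is again $SU_2$-invariant. The Gaussian $\ga_W^{(2)}$, whose density with respect to Lebesgue measure on $W_n\times W_n$ is proportional to $e^{-Q|_{W_n\times W_n}}$, is therefore $SU_2$-invariant. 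Combined with the previous paragraph, this gives invariance of $\pi_n^{-1}(\ga_W^{(2)})$ on $\widehat{\mathfrak D}_n^W$.

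Finally I would copy step~3 of the proof of Theorem~\ref{th:GE} without change: pushing the invariant measure $\pi_n^{-1}(\ga_W^{(2)})$ forward along $\kappa_n$ yields an $SU_2$-invariant Borel probability measure on $\bCP^1$, and the unique such measure is the normalized Fubini--Study one, whose density equals the right-hand side of \eqref{densGE}. The main obstacle I foresee is the second step: one must ensure that restricting the ambient Gaussian to $W_n$ produces an $SU_2$-invariant Gaussian rather than a conditional or marginal distribution that would introduce a Jacobian correction spoiling invariance. This works cleanly because $W_n$ is a \emph{complex linear} subspace of $Mat_n^\bC$, so the restricted measure is defined intrinsically by the restricted covariance form, and $SU_2$-invariance of that form is inherited verbatim from the ambient space.
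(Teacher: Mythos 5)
Your argument is essentially the paper's own: the paper proves this proposition by noting that the proof of Proposition~\ref{prop:diag} (invariance of the Gaussian density under the action \eqref{eq:rest}, combined with Lemma~\ref{lm:preserv} and step~3 of Theorem~\ref{th:GE}) repeats verbatim once $W_n\times W_n$ is stable under \eqref{eq:rest}, which is exactly your reasoning, including the key observation that restricting an $SU_2$-invariant quadratic form to an $SU_2$-stable complex linear subspace keeps the induced Gaussian invariant. Your closing appeal to uniqueness of the $SU_2$-invariant probability measure on $\bCP^1$ is just a compact rephrasing of the paper's Fubini--Study argument, so no substantive difference remains.
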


To give an  example of such $W$, recall that $GOE_n^\bC$ is the distribution on the space $Sym_n^\bC$ of complex-valued symmetric matrices, where each entry $e_{i,j}=e_{j,i},\; i < j$ of a $n\times n$-matrix has a normal distribution $N(0,1/2)+i  N(0,1/2)$, and each diagonal entry $e_{i,i}$ is distributed as $\sqrt{2} (N(0,1/2)+iN(0,1/2))$.   Observe that  $GOE_n^\bC$ is obtained by restriction of  $GE_{2,n}^\bC$ to $Sym_n^\bC$.  (Discussions of general spectral properties of complex symmetric matrices can be found  in e.g., \cite {RaGaPrPu}.)  


\begin{corollary}\label{cor:diagsym} Conclusion of  Proposition~\ref{prop:general} holds if $A$ and $B$ are independently chosen from the ensemble $GOE_{n}^\bC,$ and, more generally, from the scaled ensemble  $GOE_{\si^2,n}^\bC$  
whose off-diagonal entries are the i.i.d. standard symmetric normal complex variables and whose diagonal entries are the i.i.d.  normal complex variables with an arbitrary fixed positive variance $\si^2$.
\end{corollary}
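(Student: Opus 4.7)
The plan is to apply Proposition~\ref{prop:general} directly to the subspace $W_n := Sym_n^\bC \subset Mat_n^\bC$. Everything then reduces to verifying the two hypotheses of that proposition for this $W_n$, together with the identification of the induced measure.

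First, I would verify the subspace conditions. That $Sym_n^\bC$ is a complex-linear subspace is immediate. To see that $Sym_n^\bC \times Sym_n^\bC$ is preserved by the action \eqref{eq:rest}, I would simply transpose: if $A^T = A$ and $B^T = B$, then $(uA+vB)^T = uA + vB$ and $(-\bar v A + \bar u B)^T = -\bar v A + \bar u B$, since $u,v$ are scalars, so both output matrices remain in $Sym_n^\bC$.

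Next, I would identify the measure. Writing the ambient $GE_{\si^2,n}^\bC$-density on $Mat_n^\bC$ as proportional to $\exp\bigl(-\sum_{i\neq j}|A_{ij}|^2 - \si^{-2}\sum_i |A_{ii}|^2\bigr)$ and parameterizing $Sym_n^\bC$ by its $n(n+1)/2$ upper-triangular entries $\{A_{ij}\}_{i \le j}$, the inherited density matches $GOE_{\si^2,n}^\bC$ up to an overall rescaling of $A$ and $B$ that leaves the vanishing locus of the spectral discriminant, and hence the location of level crossings, unchanged. What really matters, and what I would verify directly, is that the product measure $GOE_{\si^2,n}^\bC \otimes GOE_{\si^2,n}^\bC$ on $Sym_n^\bC \times Sym_n^\bC$ is invariant under \eqref{eq:rest}: each entry is a rotation-invariant complex Gaussian, and the matrix $\bigl(\begin{smallmatrix} u & v \\ -\bar v & \bar u \end{smallmatrix}\bigr) \in SU_2$ acts on the Gaussian pair $(A,B) \in Sym_n^\bC \oplus Sym_n^\bC$ by a unitary block transformation, which preserves any such product Gaussian measure entrywise.

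Once these points are in place, the proof of Theorem~\ref{th:GE} transfers verbatim: Lemma~\ref{lm:preserv} depends only on the algebra of the $SU_2$-action and the discriminantal condition, and thus restricts to triples $(A, B, a{:}b)$ with $A, B \in Sym_n^\bC$; Steps~2 and~3 of that proof then yield the $SU_2$-invariance of the pushforward measure on $\bC P^1$, which forces uniformity by the Fubini--Study argument of Remark~\ref{rmk2}. I do not foresee a real obstacle, only bookkeeping: several natural conventions for the ``measure induced from $GE_{\si^2,n}^\bC$'' differ by constants (most notably a factor of $2$ between the ``restrict-the-density'' and ``independent-entry'' conventions on the off-diagonal), but all of them are $SU_2$-invariant Gaussian measures, and the argument uses only this invariance.
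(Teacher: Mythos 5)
Your proposal is correct and follows essentially the same route as the paper: it invokes Proposition~\ref{prop:general} with $W_n=Sym_n^\bC$, checks that \eqref{eq:rest} preserves $Sym_n^\bC\times Sym_n^\bC$, and identifies the $GOE_{\si^2,n}^\bC$-measure with one induced from a scaled complex Gaussian ensemble (up to the harmless change of variance/overall scale, which is exactly the paper's ``$GE_{(\si^\prime)^2,n}^\bC$ for appropriate $\si^\prime$''). Your extra direct verification of the $SU_2$-invariance of the product Gaussian measure is just the entrywise computation already contained in the proof of Proposition~\ref{prop:diag}, so nothing genuinely new or missing.
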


\begin{remark}  Further interesting examples of linear subspaces $W$ covered by Proposition~\ref{prop:general}  include Toeplitz matrices, band matrices, band Toeplitz matices, diagonal matrices, etc.  
\end{remark}

\begin{proof}[Proof of Proposition~\ref{prop:diag}]
In the set-up of this Proposition, the density of the probability to obtain a given matrix $A\in Mat_n^\bC$ with respect to the Lebesgue measure is given by the formula
$$\widetilde\ga(A)=Ke^{-\sum_{i\neq j}|A_{ij}|^2-W\sum_{i=1}^n|A_{ii}|^2}=Ke^{-Tr(AA^\star)-W\sum_{i=1}^n|A_{ii}|^2},$$
where $K$ is a  normalisation constant and $W$ is a real number. (To present a probability density in the above formula, the quadratic form $Tr(AA^\star)+W\sum_{i=1}^n|A_{ii}|^2$ has to be positive-definite which implies that $W$ can not be a large negative number.) Therefore
\begin{equation}\label{eq:mod}
\widetilde\ga^{(2)}(A,B)=K^2e^{-Tr(AA^\star +BB^\star)-W\sum_{i=1}^n(|A_{ii}|^2+|B_{ii}|^2)}.
\end{equation}
All we need to show is that the right-hand side of \eqref{eq:mod} is preserved under the action \eqref{eq:action}. In notation of the previous proof, we already know  that $Tr(CC^\star+DD^\star)=Tr(AA^\star+BB^\star)$. It remains to prove that 
$$\sum_{i=1}^n(|A_{ii}|^2+|B_{ii}|^2)=\sum_{i=1}^n(|C_{ii}|^2+|D_{ii}|^2).$$
In fact, $|A_{ii}|^2+|B_{ii}|^2=|C_{ii}|^2+|D_{ii}|^2$ for each $i$ which follows from the relation 
$$|C_{ii}|^2+|D_{ii}|^2=(uA_{ii}+vB_{ii})(\bar u \bar A_{ii}+\bar v \bar B_{ii})+(-\bar vA_{ii}+\bar u B_{ii})(-v \bar A_{ii}+ u \bar B_{ii}) =|A_{ii}|^2+|B_{ii}|^2.$$
\end{proof}

\begin{proof}[Proof of Proposition~\ref{prop:general}] Repeats the above proof of Proposition~\ref{prop:diag}. 
\end{proof}

\begin{proof}[Proof of  Corollary~\ref{cor:diagsym} ]

Both statements follow from the observation that the action \eqref{eq:action} preserves the subspace $Sym_n^\bC\times Sym_n^\bC \subset Mat_n^\bC\times Mat_n^\bC$ and that, additionally, the probability measure of the ensemble $GOE_{\si^2,n}^\bC$ (supported on $Sym_n^\bC\times Sym_n^\bC$)  is induced from that of $GE_{(\si^\prime)^2,n}^\bC$ for appropriate $\si^\prime$.
\end{proof}

\section{$SO_2$-action for $GOE$-, $GUE$- and $GE^\bR$-ensembles}\label{sec3}

This section provides some preliminary material for our study of level crossings of \eqref{pencil} with $A$ and $B$ chosen from the $GOE$-, $GUE$- and $GE^\bR$-ensembles. 
A very essential feature of all these cases is that their level crossings distribution is invariant under the action of the subgroup $SO_2\subset SU_2$ given by the same formula \eqref{eq:rest}, but with real $u$ and $v$ satisfying  $u^2+v^2=1$, see Lemma~\ref{lm:azimuth}.  

\medskip
In the above realization of $\bC P^1$ as the unit sphere  $S^2\subset \bR^3$,  $SO_2$ acts on it by rotation around the $Y$-axis, see Figure~\ref{Fig:projection} and Lemma~\ref{lm:azimuth}�   below. This circumstance implies  that the family of orbits of the $SO_2$-action on the unit sphere $S^2\simeq\bC P^1$ projected to the complex plane of parameter $\la=x+iy$ will coincide with  the family of circles given by 
$$x^2+(y-t)^2=t^2-1,\quad |t|\ge 1.$$

Besides    the above cylindrical  coordinates $(\rho, \phi,Z)$  in $\bR^3$,  let us   introduce  the cylindrical coordinates $(\rho, \psi, Y)$ where $X=\rho \cos \psi,\; Y=Y,\; Z=\rho \sin \psi$. Then $(\psi,Y)$,  $0\le \psi \le 2\pi,\; -1\le Y\le 1$ again parameterises the unit sphere $S^2\simeq \bC P^1$. Lemma~\ref{lm:azimuth} implies that  in the cylindrical coordinates $(\psi,Y)$, the distributions of  level crossings of the above ensembles on $\bC P^1$  are of the form:  
$$dens(\psi,Y)d\psi dY=\rho(Y)d\psi dY,$$
for some  univariate function $\rho$, 
i.e.,  its density  depends only on $Y$ and is independent of the angle variable $\psi$. (In general, $\rho(Y)dY$ can be a $1$-dimensional measure which  does not have a  smooth density function. This happens, for example,  in the case of $GE_2^\bR$, when $\rho(Y)dY$ has a point mass at the origin.)  In the original coordinates $(x,y)$, where $\la=x+iy$, the distribution of  level crossings for the above cases  will be of the form
\begin{equation}\label{eq:form}
dens(x,y)dxdy=\rho\left(\frac{2y}{x^2+y^2+1}\right) \frac{4dx dy}{(x^2+y^2+1)^2},
\end{equation}
with the same $\rho$ as above, see Proposition~\ref{prop:vital}.

\medskip
Therefore the problem of finding the distribution of level crossings for Gaussian orthogonal, Gaussian unitary, and real Gaussian ensembles  becomes in a sense one-dimensional which is a big advantage.  In the cases under consideration, $\rho$ has an additional property of being an even function.

\begin{center} 
						 \begin{figure}[h]
				\includegraphics[scale=0.5]{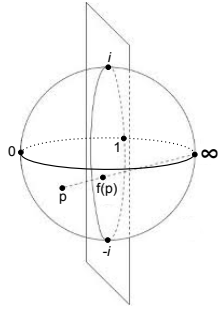}
				\caption{The $SO_2$-action on $\bC P^1$ projectivising the complex plane of parameter $\la$ for the  Gaussian orthogonal, Gaussian unitary, and real Gaussian ensembles.}\label{Fig:projection}
			\end{figure}
\end{center}

We start with the following statement generalizing Lemma~\ref{lm:preserv}. 

\begin{lemma}\label{lm:azimuth} The action of $\mathcal U=\begin{pmatrix} u & -v\\ v & u \end{pmatrix}\in SO_2\subset SU_2$ on pairs of matrices $(A,B)$ given by
$$
(A,B) \ast \begin{pmatrix} u &-v\\ v & u \end{pmatrix} =(uA +v B, -vA+u B),
$$
where $u$ and $v$ are real numbers satisfying the condition $u^2+v^2=1$, 
preserves the following measures on the following matrix (sub)spaces:

\medskip

\noindent 
a) the product $\ga_{GOE}^{(2)}$ of two $GOE_{n}$-measures $\ga_{GOE}$ on the space $Sym_n^\bR\times Sym_n^\bR$;

\noindent
b)  the product $\ga_{GUE}^{(2)}$ of two $GUE_{n}$-measures $\ga_{GUE}$ on the space 
$\HH_n\times \HH_n$;

\noindent
c) the product $\ga_{GE}^{(2)}$ of two $GE_{n}^\bR$-measures $\ga_{GE}$ on the space $Mat_n^\bR\times Mat_n^\bR$. 
\end{lemma}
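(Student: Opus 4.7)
The plan is to reduce all three cases to a single one-line computation that is the real specialisation of step 2 in the proof of Theorem~\ref{th:GE}. Observe first that in each ensemble the Gaussian density on the relevant matrix subspace takes the form $\propto \exp\bigl(-c\,\operatorname{Tr}(AA^\star)\bigr)$ for an appropriate positive constant $c$ and involution $\star$: namely $\ga_{GOE}(A)\propto\exp\bigl(-\tfrac14\operatorname{Tr}(A^2)\bigr)$ on $Sym_n^\bR$, $\ga_{GUE}(A)\propto\exp\bigl(-\tfrac12\operatorname{Tr}(A^2)\bigr)$ on $\HH_n$, and $\ga_{GE}(A)\propto\exp\bigl(-\tfrac12\operatorname{Tr}(AA^T)\bigr)$ on $Mat_n^\bR$. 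In the first two cases I would briefly verify this by expanding $\operatorname{Tr}(A^2)=\sum_i A_{ii}^2+2\sum_{i<j}|A_{ij}|^2$, which provides exactly the cross-weighting $1:2$ needed to reconcile the differing diagonal and off-diagonal variances of each ensemble.

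Next I would check that each subspace is preserved by $(A,B)\mapsto(uA+vB,-vA+uB)$ for real $u,v$. Since real scalars commute with transposition and Hermitian conjugation, $(uA+vB)^T=uA^T+vB^T$ and $(uA+vB)^{*}=uA^{*}+vB^{*}$, so symmetric, Hermitian, and real structures are all preserved by each component. The core identity is then: with $C=uA+vB$ and $D=-vA+uB$,
\begin{equation*}
CC^\star+DD^\star=(u^2+v^2)(AA^\star+BB^\star)+uv(AB^\star+BA^\star)-uv(AB^\star+BA^\star)=AA^\star+BB^\star,
\end{equation*}
where the cross terms cancel identically even before taking traces and $u^2+v^2=1$ is used. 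Taking traces, the exponent of the joint density $\ga^{(2)}(A,B)$ is invariant. As a linear map on the underlying real vector space, $(A,B)\mapsto(C,D)$ acts as a planar rotation in each coordinate pair $(A_{ij},B_{ij})$, hence is orthogonal, so the product Lebesgue measure (and therefore the full probability measure) on each matrix product space is preserved.

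The main and essentially only obstacle is bookkeeping: matching correctly the nonuniform diagonal versus off-diagonal Gaussian variances for $GOE$ and $GUE$ to the $\operatorname{Tr}(A^2)$ form, and being careful that the block linear map $(A,B)\mapsto(C,D)$ really restricts to each of the three subspaces before one invokes invariance of Lebesgue measure. Once that is done, the lemma collapses to the same identity used in the complex case, with $SU_2$ replaced by its real form $SO_2$ and complex conjugation becoming trivial.
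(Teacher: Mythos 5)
Your proof is correct and takes essentially the same route as the paper: both reduce the lemma to the invariance of $\operatorname{Tr}(AA^\star+BB^\star)$ under $(A,B)\mapsto(uA+vB,\,-vA+uB)$ with $u^2+v^2=1$, verified by expanding and cancelling the cross terms, after noting that each subspace and each Gaussian weight has the form $\exp\bigl(-c\,\operatorname{Tr}(AA^\star)\bigr)$. Your write-up merely treats the three ensembles uniformly and makes explicit the unit-Jacobian (Lebesgue-measure) point that the paper leaves implicit, which is a harmless refinement rather than a different argument.
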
  

\begin{proof} Similarly to   Lemma~\ref{lm:preserv}, $SO_2$ acts on $\widehat {\mathfrak D}_n\subset Sym_n^\bR \times Sym_n^\bR \times \bCP^1$  (resp. on  $\widehat {\mathfrak D}_n\subset \HH_n \times \HH_n \times \bCP^1$  and  on $\widehat {\mathfrak D}_n\subset Mat_n^\bR\times Mat_n^\bR\times \bCP^1$), where $\widehat {\mathfrak D}_n$ is the spectral determinant, i.e.,  the set   of all triples $\left(A,B, (a : b)\right)$ such that $(a : b)$ is a level crossing point of the pair $(A,B)$. (By a slight abuse of notation, in all cases we use the same letter for the spectral determinant.) Here $SO_2$ acts on 
 $\bC P^1$ as 
$$(a:b)\ast \begin{pmatrix}u &-v\\v& u\end{pmatrix} =(ua +vb: -va +ub).$$

Notice that  $(ua+  vb :  -va+ub)$ is a
level crossing point of the pair $(u A + v B, -v  A + a B)$. Indeed,
$$(ua +vb)(u A + v B) + (-va + ub)(-v A + u B)=$$
$$= u^2aA +  v^2bB + a uvB +  bvuA +  v^2aA  + u^2bB - auvB  -bvuA  = aA + bB.$$
Hence $SO_2$ acts on $\widehat {\mathfrak D}_n$, and this action commutes with the projections $\pi_n: \widehat {\mathfrak D}_n \to Sym_n^\bR \times Sym_n^\bR$  (resp.    $\pi_n: \widehat {\mathfrak D}_n \to \HH_n \times \HH_n$  , and   $\pi_n: \widehat {\mathfrak D}_n \to Mat_n^\bR \times Mat_n^\bR$),  
as well as with $\kappa_n : \widehat {\mathfrak D}_n \to  \bC P^1$. To check that  the action of $SO_2$ on  $Sym_n^\bR \times Sym_n^\bR$,  $\HH_n \times \HH_n$, and $Mat_n^\bR \times Mat_n^\bR$, preserves the densities   $\ga_{GOE}^{(2)}$,  $\ga_{GUE}^{(2)}$, and $\ga^{(2)}_{GE}$, respectively,  recall that these densities are given by  $C_{GOE_n}e^{\frac{-n}{4} tr(A^2+B^2)}$,   $C_{GUE_n}e^{\frac{-n}{2} tr(AA^*+BB^*)}$, and $C_{GE_n}e^{\frac{-n}{2} tr(AA^{T}+BB^{T})}$,    
respectively. Here $C_{GOE_n}, C_{GUE_n}, C_{GE_n}$ are the corresponding normalising constants. 

\medskip
Therefore, in e.g., the orthogonal case,  the density of the pair $(A, B)$ is determined by $tr(A^2 + B^2)$. At the same time 
$$tr((u A + v B)^2 + ( -v  A + u B)^2)=$$
$$=tr(u^2A^2 +uv AB+uv BA+ v^2B^2 + v^2A^2-uv  AB -uv BA+u^2B) = tr(A^2 + B^2).
$$
Similar calculations work in the other two cases. 

\medskip  
The density $\mu$ of  level crossing points in $\bCP^1$ is given by $\kappa_n\left(\pi_n^{-1}(\ga_{GOE}^{(2)})\right)$ on  $Sym_n^\bR\times Sym_n^\bR$,   $\kappa_n\left(\pi_n^{-1}(\ga_{GUE}^{(2)})\right)$ on $\HH_n\times \HH_n$, and $\kappa_n\left(\pi_n^{-1}(\ga_{GE}^{(2)})\right)$ on $Mat_n^\bR\times Mat_n^\bR$  resp. That is, the measure $ \mu$ of a measurable set $E \subset \bCP^1$ is given by $\gamma^{(2)}(\pi_n(\kappa_n^{-1}(E)))$.   
Notice that 
$$ \mu (g \cdot E) = \ga^{(2)}(\pi_n(\kappa_n^{-1}(g\cdot E))) = \ga^{(2)}(\pi_n(g\cdot \kappa_n^{-1}(E))) = \ga^{(2)}(g\cdot \pi_n(\kappa_n^{-1}(E))) =$$ 
$$=\ga^{(2)}(\pi_n(\kappa_n^{-1}(E))) = \mu(E).$$
  So we can conclude that for the above three ensembles, the density of  level crossing points on $\bCP^1$ is invariant under the above action by $SO_2$. \end{proof}

\begin{proposition}\label{prop:vital} In the standard coordinates $(X,Y,Z)$ in $\bR^3$ introduced in Remark~\ref{rmk2}, the group 
$SO_2$ acts on $\bC P^1\subset \bR^3$ by rotation with respect to the $Y$-axis. This fact implies that in the above three cases, the distribution of  level crossings in the cylindrical coordinates $(\psi;Y)$ is independent of $\psi$.
\end{proposition}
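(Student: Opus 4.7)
The plan is twofold. First, I would verify that the $SO_2$-action of Lemma~\ref{lm:azimuth} on $\bC P^1$, under the stereographic identification $\bC P^1\simeq S^2\subset\bR^3$ of Remark~\ref{rmk2}, coincides with rotation about the $Y$-axis. The $\psi$-independence of the level-crossing distribution would then follow immediately from the $SO_2$-invariance already established in Lemma~\ref{lm:azimuth}.

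For the first step, I would pass to the affine coordinate $\la=b/a$ on $\bC P^1$. The action $(a:b)\mapsto(ua+vb:-va+ub)$ becomes the M\"obius transformation
$$
\la\longmapsto \frac{u\la-v}{v\la+u},\qquad u^2+v^2=1,
$$
whose fixed points satisfy $v\la^2+v=0$, i.e.\ $\la=\pm i$ (for $v\neq 0$). Under the stereographic projection from the north pole, $\la=\pm i$ correspond precisely to the antipodal points $(0,\pm 1,0)\in S^2$ on the $Y$-axis. Since the one-parameter subgroup $SO_2\subset SU_2$ is carried by the classical double cover $SU_2\to SO_3$ to a one-parameter subgroup of $SO_3$ fixing these two antipodal points, it has to coincide with the rotation subgroup about the $Y$-axis. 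Alternatively, one can parametrise $u=\cos(\theta/2)$, $v=\sin(\theta/2)$, convert $\la'=(u\la-v)/(v\la+u)$ back to coordinates $(X,Y,Z)$ via $X=2x/(1+|\la|^2)$, $Y=2y/(1+|\la|^2)$, $Z=(|\la|^2-1)/(|\la|^2+1)$, and verify directly that $Y'=Y$ while $(X',Z')$ is $(X,Z)$ rotated by $\theta$.

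For the second step, note that in the cylindrical coordinates $(\psi,Y)$ centred on the $Y$-axis, a rotation about this axis is exactly the shift $\psi\mapsto\psi+\mathrm{const}$. Lemma~\ref{lm:azimuth} asserts $SO_2$-invariance of the level-crossing measure $\mu$ on $\bC P^1$ for each of the three ensembles $GOE$, $GUE$, $GE^\bR$; combining this with the first step shows that $\mu$ is invariant under all translations in $\psi$, hence factors as $\rho(Y)\,d\psi\,dY$ for a suitable one-dimensional measure in $Y$. I do not foresee a real obstacle: the content of the proposition is the identification of the axis of rotation, and this is pinned down by the elementary fixed-point computation $\la=\pm i$.
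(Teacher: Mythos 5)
Your argument is correct, but your primary route differs from the one in the paper. The paper proves the proposition by brute force: it writes the $SO_2$-action in the affine chart as $\la_\Theta=\frac{\la\cos\Theta+\sin\Theta}{\cos\Theta-\la\sin\Theta}$, pushes this through the relations $X=\frac{2x}{x^2+y^2+1}$, $Y=\frac{2y}{x^2+y^2+1}$, $Z=\frac{x^2+y^2-1}{x^2+y^2+1}$, and verifies by explicit simplification that $Y_\Theta=Y$ and $\tan\psi_\Theta=\tan(\psi+2\Theta)$, so the action is rotation about the $Y$-axis by the angle $2\Theta$; the $\psi$-independence of the measure then follows from Lemma~\ref{lm:azimuth} exactly as you say. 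You instead invoke the classical fact that the M\"obius action of $SU_2$ (here $\la\mapsto\frac{u\la-v}{v\la+u}$, which is of the form $\la\mapsto\frac{a\la+b}{-\bar b\la+\bar a}$ with $a=u$, $b=-v$) is by rotations of $S^2$ under the standard stereographic projection — which is indeed the convention of Remark~\ref{rmk2} — and then pin down the axis by the fixed-point computation $v\la^2+v=0$, i.e.\ $\la=\pm i\leftrightarrow(0,\pm1,0)$. This is shorter and more conceptual, and it avoids the algebraic simplifications in the paper; what the paper's computation buys in exchange is the explicit rotation angle $2\Theta$, which immediately shows that the whole circle of rotations about the $Y$-axis is realized. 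In your version that surjectivity needs a small supplementary remark (the image of the connected group $SO_2$ is a connected nontrivial subgroup of the rotations about the $Y$-axis, hence all of it — or, equivalently, your half-angle parametrization $u=\cos(\theta/2)$, $v=\sin(\theta/2)$), after which invariance under all $\psi$-translations, and hence the factorization $\rho(Y)\,d\psi\,dY$, follows as you state.
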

\begin{proof} We will show that for $\mathfrak U=\begin{pmatrix} \cos \Theta & -\sin \Theta\\  \sin \Theta & \cos \Theta\end {pmatrix}$, its action on a triple $(A, B, (\psi, Y))$ will be given by 
 $$(A, B, (\psi, Y))\ast \mathfrak U = (u A +v B,  -v A + u B, (\psi+  2\theta , Y))$$
implying that  the action of $SO_2$ on $\bC P^1$ realized as the unit sphere in $\bR^3$ is by rotation of the sphere about the $Y$-axis. We only need to concentrate on the action of $\mathfrak U$ on the last coordinate. 
In the homogeneous coordinates $(a: b)$ of  $\bC P^1$, this action, by definition, is given by 
$$(a:b)\ast  \begin{pmatrix} \cos \Theta & -\sin \Theta\\  \sin \Theta & \cos \Theta\end {pmatrix}=(a\cos \Theta+b\sin \Theta: -a\sin \Theta +b \cos \Theta).$$
Setting $\la=\frac{a}{b}$ and $\la=x+iy$, we get that 
$$\la_\Theta:= \la \ast \mathfrak U=\frac{\la \cos \Theta +\sin \Theta}{\cos \Theta-\la \sin \Theta}.$$
In terms of the pair $(x,y)$, the same action is expressed as 
$$(x,u)  \ast \mathfrak U : =(x_\Theta, y_\Theta)=$$
$$\left (\frac{(\sin \Theta + x\cos\Theta)(\cos \Theta -x \sin \Theta)-y^2\sin \Theta \cos \Theta}{(\cos \Theta -x \sin \theta)^2+(y\sin\Theta)^2}, \frac{(\sin \Theta + x\cos\Theta)y\sin \Theta+(\cos \Theta -x \sin \Theta)y\cos \Theta}{(\cos \Theta -x \sin \theta)^2+(y\sin\Theta)^2} \right).$$

\medskip
The relations between the coordinates $(x,y)$ in the $\la$-plane and the coordinates $(X,Y,Z)$ restricted to the sphere are as follows
\begin{equation}\label{eq:rel}
X=x(1-Z)=\frac{2x}{x^2+y^2+1},\quad Y=y(1-Z)=\frac{2y}{x^2+y^2+1},\quad Z=\frac{x^2+y^2-1}{x^2+y^2+1}.
\end{equation}

We have the relation $$(\psi, Y)=\left(\arctan \frac{Z}{X}, Y\right),$$ where $(X,Y,Z)$ are restricted to the sphere.   

\medskip
  We need to express the above $SO_2$-action in the cylindrical coordinates $(\psi,Y)$ on $S^2\simeq \bC P^1$. 
  First we check that the coordinate $Y$  is preserved. In other words,  for any real pair $(x,y)$,  one forms the triple $(X,Y,Z)$ using  \eqref{eq:rel}. Then for the above pair $(x_\Theta,y_\Theta)$, one forms the triple 
  $(X_\Theta,Y_\Theta,Z_\Theta)$ using  \eqref{eq:rel}. What we need to check is that, for any $\Theta$,  one has that $Y=Y_\Theta$. Indeed, $Y_\Theta$ is given by 
  $$Y_\Theta=2\frac{((x C+ S)y S+(C-xS)yC)((C-xS)^2+(yS)^2}{Exp}, $$
  where $C:=\cos \Theta$, $S:=\sin\Theta$, and 
  $$Exp=((xC+S)^2(C-xS)^2-2y^2SC(xC+S)(C-xS)+y^2S^2C^2$$
  $$+(xC+S)^2y^2S^2+2(xC+S)(C-xS)y^2SC+(C-xS)^2y^2C^2$$
  $$+(C-xS)^4+2(C-xS)^2y^2S^2+y^4S^4.$$
  Simplifying the above formula for $Y_\Theta$, we get 
  $$Y_\Theta=\frac{2y}{x^2+y^2+1}=Y.$$
  
  Now we want to find the relation between the angle $\psi_\Theta$ and the pair $(\psi, \Theta)$. Observe that 
  $$\tan \psi_\Theta=\frac{Z_\Theta}{X_\Theta}=\frac{x_\Theta^2+y_\Theta^2-1}{2x_\Theta},$$
  which using the above expressions for $(x_\Theta,y_\Theta)$ gives
  $$\tan \psi_\Theta=\frac{((xC+S)(C-xS)-y^2SC)^2+((xC+S)yS+(C-xS)yC)^2-((C-xS)^2+y^2S^2)^2}{2((C-xS)^2+y^2S^2)((S+xC)(C-xS)-y^2SC)}. 
 $$
Simplifyng the latter expression, we  obtain   
 $$\tan \psi_\Theta=\frac{(x^2+y^2-1)\cos 2\Theta+2x\sin 2\Theta}{2x\cos 2\Theta -(x^2+y^2-1)\sin 2\Theta}=\frac{Z\cos 2\Theta+X\sin 2\Theta}{X\cos 2\Theta -Z\sin 2\Theta}.$$
 Diving the numerator and denominator of the latter expression by $X\cos 2\Theta$, we get 
 $$\tan \psi_\Theta=\frac{\frac{Z}{X}+\tan 2\Theta}{1-\frac{Z}{X}\tan 2\Theta}=\frac{\tan \psi +\tan 2\Theta}{1-\tan \psi \tan 2\Theta}=\tan (\psi+2\Theta), $$
 which implies that $\psi_\Theta=\psi+2\Theta$.
\end{proof}

\begin{lemma}\label{lm:radial} If a smooth distribution which is invariant under the above $SO_2$-action is also radial in the $\la$-plane, then it is constant with respect to the spherical metric on $\bC P^1$.

\end{lemma}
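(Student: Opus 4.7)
The plan is to exhibit the distribution as $SO_3$-invariant by combining the $SO_2$-action from Proposition~\ref{prop:vital} with the radial symmetry in the $\la$-plane, and then invoke the uniqueness of the $SO_3$-invariant probability measure on $S^2$.

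First I would observe that, by Proposition~\ref{prop:vital}, the $SO_2$-action from Lemma~\ref{lm:azimuth} is realized on $S^2\subset\bR^3$ as the group of rotations about the $Y$-axis. Next, I would translate the radial hypothesis into a second circle action: saying that the density is radial in the $\la$-plane means it is invariant under $\la\mapsto e^{i\alpha}\la$ for all $\alpha\in\bR$. Under the stereographic projection from the north pole $(0,0,1)$, this rotation of the $\la$-plane lifts to the rotation of $S^2$ about the $Z$-axis, since the stereographic projection intertwines the two $SO_2$-actions fixing the poles of the $Z$-axis (equivalently, one checks directly from \eqref{eq:rel} that $\la\mapsto e^{i\alpha}\la$ sends $(X,Y,Z)$ to $(X\cos\alpha-Y\sin\alpha,\ X\sin\alpha+Y\cos\alpha,\ Z)$).

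Thus the distribution on $S^2$ is invariant under the two one-parameter subgroups $R_Y(\Theta)$ and $R_Z(\alpha)$. These two subgroups generate the full rotation group $SO_3$: indeed, their Lie algebra brackets $[L_Y,L_Z]=L_X$ produce the infinitesimal generator of rotations about the $X$-axis, so the closed subgroup of $SO_3$ they generate contains the full Lie algebra $\mathfrak{so}_3$ and therefore equals $SO_3$. Hence the distribution is $SO_3$-invariant.

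Finally, since $SO_3$ acts transitively on $S^2$, the unique (up to scaling) $SO_3$-invariant smooth measure on $S^2$ is the one proportional to the standard spherical area form induced from the round metric, i.e.\ the measure $\tfrac{1}{4\pi}d\phi\,dZ$ of \eqref{eq:const}. This yields the conclusion. The only non-routine point is the Lie-theoretic step that rotations about two distinct axes generate all of $SO_3$, but this is a standard fact and requires no calculation beyond the bracket $[L_Y,L_Z]=L_X$.
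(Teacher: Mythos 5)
Your proof is correct, but it follows a genuinely different route from the paper. The paper stays entirely inside the explicit coordinate formulas: by \eqref{eq:form} the $SO_2$-invariance forces the density to be a function of $Y=\frac{2y}{x^2+y^2+1}$ alone (times the conformal factor), radiality forces it to be of the form $R(r)/r$ in the $\la$-plane, and equating the two one gets a single function that is constant both on the $SO_2$-orbit circles $x^2+(y-t)^2=t^2-1$ and on the origin-centered circles $x^2+y^2=K$; since these two families of level sets are transverse, the function must be constant, which gives the claim. You instead reinterpret radiality as invariance under the rotations $\la\mapsto e^{i\alpha}\la$, check via \eqref{eq:rel} that these lift to rotations of $S^2$ about the $Z$-axis while the $SO_2$-action of Proposition~\ref{prop:vital} gives rotations about the $Y$-axis, and then use that these two circle subgroups generate $SO_3$ (your bracket argument works; the Euler-angle factorization $R_Z(\alpha)R_Y(\beta)R_Z(\gamma)$ gives the same fact without any Lie theory), so the measure is $SO_3$-invariant and hence, by transitivity and uniqueness of the invariant measure, proportional to the round area form. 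Your argument is more conceptual and immediately generalizes (any measure invariant under rotations about two distinct axes of $S^2$ is the uniform one), at the price of importing the generation fact and uniqueness of the invariant measure; the paper's computation is more pedestrian but self-contained and uses only the formulas already established in \S~3. One small point worth making explicit in your write-up: only the easy direction of the equivalence ``radial $\Leftrightarrow$ invariant under $\la\mapsto e^{i\alpha}\la$'' is needed (radial implies invariant), and the pushforward of the planar measure under inverse stereographic projection has no atom at the north pole because the distribution is assumed smooth, so the $Z$-rotation invariance indeed holds on all of $\bC P^1$.
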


\begin{proof}
Indeed, by formula \eqref{eq:form}, such a distribution  in the $\la$-plane should be of the form 
$$dens(x,y)dxdy=\rho\left(\frac{2y}{x^2+y^2+1}\right) \frac{4dx dy}{(x^2+y^2+1)^2}.$$
On the other hand,  in the polar coordinates $(r,\theta)$ in the $\la$-plane, the same distribution has the form 
$$den(r,\theta)drd\theta=R(r)drd\theta, $$
implying that 
$$\rho\left(\frac{2y}{x^2+y^2+1}\right) \frac{4}{(x^2+y^2+1)^2}=\frac{R(r)}{r} \Leftrightarrow \rho\left(\frac{2y}{r^2+1}\right)=F(r). $$ 
The l.h.s is a function constant on the family of circles 
$$x^2+(y-t)^2=t^2-1,\quad |t|\ge 1$$
while the r.h.s is constant on the family of circles 
$$x^2+y^2=K$$
which can only happen when both sides are constant. Since $\rho \left(\frac{2y}{r^2+1}\right)=K$, the statement follows.  
\end{proof}

\section{Gaussian orthogonal ensembles and Gaussian unitary ensembles}\label{sec4}

Here we prove Theorems~\ref{th:GOE} and ~\ref{th:GUE} stated in the Introduction. The main argument is  similar to our other proofs dealing with the case $n=2$, comp. \cite{ShZa1} and the next section; it has an advantage that one obtains more detailed information. 

Notice that the ensemble $GOE_n$ is invariant under the conjugation by orthogonal matrices implying that for any pair of $GOE_n$-matrices $(A,B)$, we can conjugate $A+\la B$ by an orthogonal matrix to make $A$  diagonal. 

\begin{proof}[Proof of Theorem~\ref{th:GOE}]
By the above, we assume without loss of generality that $A$ is a diagonal matrix, i.e.,  $A=\left(\begin{array}{cc} \al_1 &0 \\
																																																					0 & \al_2 \end{array}\right)$, 
where  $\al_1$ and $\al_2$ are the eigenvalues of $A$ satisfying the condition $\al_1\le \al_2$.
Moreover, we can shift our matrix family so that $A = \left(\begin{array}{cc} 0 &0 \\
																																																					0 & \Delta \end{array}\right)$, where $\Delta = \al_2-\al_1\ge 0$.
																																																					 
We know that  level crossing points of the linear family $A+\la B$ are exactly the zeroes of the discriminant  $Dsc(\la)$ of the characteristic polynomial $\chi(\lambda, t)$ with respect to  the variable $t$, where 
\begin{equation}\label{eq:chi}
\chi(\lambda, t)= \text{det}(A+\la B+t I) = t^2+t(\la\; \mathrm{Tr}\, (B) +\Delta)+\la^2 \det (B) +\la b_{11}\Delta.
\end{equation}

The latter discriminant equals
\begin{equation}\label{eq:dscr}
Dsc(\la)=\la^2((b_{22}- b_{11})^2+4|b_{12}|^2)+ 2\la\Delta(b_{22}-b_{11})+\Delta^2.
\end{equation} 
Therefore, since all coefficients of the latter equation are real and the discriminant of $Dsc(\la)$ considered as a quadratic equation in $\la$ is given by 
$$D=-4\Delta^2|b_{12}|^2\le 0,$$
  level crossing points of a generic pair $(A,B)$ form a complex conjugate pair $(\la,\bar{\la})$, where 											
\begin{equation}\label{eq:tau} 
\la = \Delta\ds\frac{b_{11}-b_{22} + 2i |b_{12}|}{(b_{22}-b_{11})^2 + 4|b_{12}|^2}\quad\text{and} \quad \bar{\la} = \Delta\ds\frac{b_{11}-b_{22} - 2i |b_{12}|}{(b_{22}-b_{11})^2 + 4|b_{12}|^2}.
\end{equation} 

\smallskip
In order to find the distribution of $\la$, we will first find its conditional distribution  assuming that $\Delta$ is constant.
Set $\Sigma := \frac{b_{11}-b_{22}}{\Delta}$ and $\Theta := \frac{2|b_{12}|}{\Delta}$ giving $\la = \frac{1}{\Sigma-i\Theta}$.

\medskip
Since $b_{11}, b_{22} \sim N(0,2)$ and are independent, we get that $\Sigma\sim N(0,\frac{4}{\Delta^2})$. Further,  $\Theta\sim \frac{2}{\Delta}| N(0,1)|$, which can be expressed using $\chi_1$-distribution, see e.g. \cite{chi}.  
Therefore, the conditional PDFs of $\Sigma$ and $\Theta$ are given by 
$$\PP_\Sigma^\Delta (u)=\frac{\Delta}{2\sqrt{2\pi}}\cdot e^{-\frac{u^2\Delta^2}{8}}$$ and  
$$\PP_\Theta^\Delta (v)=\begin{cases} \frac{\Delta}{\sqrt{2\pi}}\cdot e^{-\frac{v^2\Delta^2}{8}}, \text{  for  } v\ge 0;\\
                                                                    0, \quad\quad\quad \quad \text{otherwise.} \end{cases}$$ 
Since $\Sigma$ depends on $b_{11}$ and $b_{22}$, while $\Theta$ depends of $b_{12}$,  we get that $\Sigma$ and $\Theta$ are independent random variables. 
Therefore,  their joint distribution is given by
$$\PP^\Delta_{(\Sigma, \Delta)} (u,v)=\PP_\Sigma^\Delta (u)\cdot \PP_\Theta^\Delta (v)=\begin{cases} \frac{\Delta^2}{4\pi}e^{-\frac{\Delta^2(u^2+v^2)}{8}}, \text{ for }\; v \ge 0;\\
0, \quad\quad\quad\quad\quad \text{otherwise}. \end{cases}$$ 

\medskip  
Introduce $\frak X:=\frac{\Sigma}{\Sigma^2+\Theta^2}$ and $\frak Y:=\frac{\Theta}{\Sigma^2+\Theta^2}$ implying that $\la=\frac{1}{\Sigma-i\Theta}=\frak X+i\frak Y$. Since  the Jacobian of the variable change is given by $$\left|\frac{\partial(x,y)}{\partial(u,v)}\right| = \frac{1}{(u^2+v^2)^2} = (x^2+y^2)^2,$$  the joint distribution  of $\frak X$ and $\frak Y$ coincides with 
 $$\PP_{(\frak X,\frak Y)}^\Delta (x,y)=\begin{cases} \frac{\Delta^2}{4\pi(x^2+y^2)^2}\cdot e^{-\frac{\Delta^2}{8(x^2+y^2)}},  &\text{ for }\; y \ge 0;\\ 0, \quad\quad\quad\quad\quad &\text{otherwise}. \end{cases}$$

Therefore, the conditional distribution of $\la$ with $\Delta$ fixed equals 
$$\PP^\Delta (\la)=\frac{\Delta^2}{4\pi\abs{\la}^4}\cdot e^{-\frac{\Delta^2}{8\abs{\la}^2}},$$

\medskip
The distribution of pairs of eigenvalues $(\al_1,\al_2)$ with $\al_1\leq \al_2$ of a $GOE_2$-matrix is given by  $$\PP(\al_1,\al_2)=\frac{(\al_2-\al_1)}{4\sqrt{2\pi}}\cdot e^{-\frac{\al_1^2+\al_2^2}{4}},$$
where $-\infty<\al_1\le \al_2<\infty$. 

\medskip
\noindent Thus, the distribution of $\la$ with $\text{Im}\;\la >0$ is given by 
\begin{align*}
 \PP_{>0}(\la)&= \iint_{-\infty<\al_1\le \al_2<\infty}
  \frac{(\al_2-\al_1)}{4\sqrt{2\pi}}\cdot e^{-\frac{\al_1^2+\al_2^2}{4}}\cdot \frac{(\al_2-\al_1)^2}{4\pi\abs{\la}^4}\cdot e^{-\frac{(\al_2-\al_1)^2}{8\abs{\la}^2}}\ d\al_2 \  d   \al_1\\
  &= \iint_{-\infty<\al_1\le \al_2<\infty}
  \frac{(\al_2-\al_1)^3}{16\sqrt{2}\cdot \pi^{3/2}} \cdot  \frac{1}{\abs{\la}^4} \cdot e^{-\frac{\al_1^2+\al_2^2}{4}}\cdot e^{-\frac{(\al_2-\al_1)^2}{8\abs{\la}^2}}\ d\al_2 \  d   \al_1
\\
 &=\frac{2}{\pi(1+\abs{\la}^2)^2}.
\end{align*}

To get the actual PDF of $\la,$ we must  divide the previous answer by $2$, getting $$\PP_{GOE_2}(\la)=\frac{1}{\pi(1+\abs{\la}^2)^2}.$$ 
\end{proof}

Now we consider the $2\times 2$-Gaussian unitary ensemble. 

\begin{proof}[Proof of Theorem~\ref{th:GUE}] 

Using the same methods as for $GOE_2$, we calculated the distribution of level crossings for $GUE_2$-case. 
As in the previous case,  level crossing point $\la$ with nonnegative imaginary part is given by   $$\la =\Delta \frac{b_{11}-b_{22}+2i b_{12}}{(b_{22}-b_{11})^2+4\mid b_{12}\mid^2}=\frac{1}{\Sigma-i\Theta},$$
where $\Sigma:=\frac{b_{11}-b_{22}}{\Delta}$ and
$\Theta:=\frac{2\abs{b_{12}}}{\Delta}$.

\medskip
Since $b_{11},b_{22}\sim N(0,1)$ and are independent, we obtain $b_{22}-b_{11}\sim N(0,2)$, and hence, $\Sigma\sim N\left(0,\frac{2}{\Delta^2}\right)$.
Therefore, the conditional PDF of $\Sigma$ is given by 
$$\PP_\Sigma^{\Delta}(u) = \frac{1}{\frac{\sqrt{2}}{\Delta} \sqrt{2}\sqrt{\pi}}e^{-u^2\Delta^2/4} = \frac{\Delta}{2\sqrt{\pi}}e^{-u^2\Delta^2/4}.$$
Since $\text{Re}(b_{12}),\text{Im}(b_{12})\sim N(0,\frac{1}{2})$, then $\frac{1}{1/\sqrt{2}} |b_{12}| \sim \chi_2$. 
Thus, the conditional PDF of $\Theta$ is given by  
$$\PP_\Theta^{\Delta}(v) =\begin{cases} \frac{\Delta^2y}{2}e^{-v^2\Delta^2 /4}, \text{  for  } v\ge 0;\\
                                                                  0, \quad\quad\quad\quad \text{         otherwise.} \end{cases}$$
The joint distribution of $\Sigma$ and $\Theta$ gives us the conditional distribution of $\frac{1}{\la}$. 
Since $b_{12}$ is independent of $b_{11}$ and $b_{22}$, then $\Sigma$ and $\Theta$ are also independent random variables which implies that the conditional PDF of $\frac{1}{\la}$ is the product of the PDFs of $\Sigma$ and $\Theta$, i.e., 
$$\PP_{1/\la}^{\Delta}(u,v)=\PP_\Sigma^{\Delta}(u)\cdot \PP_\Theta^{\Delta}(v) = \frac{\Delta^3}{4\sqrt{\pi}} ve^{-(u^2+v^2)\Delta^2 /4}.$$
Introducing $\frak X := \frac{\Sigma}{\Sigma^2+\Theta^2}$ and $\frak Y := \frac{\Theta}{\Sigma^2+\Theta^2}$, we get  $\la = \frac{1}{\Sigma-i\Theta} = \frak X + i\frak Y$. 
Since  the Jacobian of the variable change is given by $$\left|\frac{\partial(x,y)}{\partial(u,v)}\right| = \frac{1}{(u^2+v^2)^2} = (x^2+y^2)^2,$$  the joint distribution  of $\frak X$ and $\frak Y$ coincides with  
$$\PP^\Delta_{(\frak X,\frak Y)}(x,y) = \frac{\PP^\Delta_\Sigma(u)\cdot \PP^\Delta_\Theta(v)}{\left|\frac{\partial(x,y)}{\partial(u,v)}\right|} = \frac{y \Delta^3 e^{-\Delta^2/4(x^2+y^2)}}{4\sqrt{\pi}(x^2+y^2)^3}.$$
As $\frak X= \text{Re}(\la)$ and $\frak Y=\text{Im}(\la)$, then for a given value of $\Delta$, the conditional distribution of $\la$  is given by 
$$\PP^{\Delta}(\la)=\frac{\text{Im}(\la) \Delta^3 e^{-\Delta^2/4|\la|^2}}{4\sqrt{\pi}|\la|^6}.$$
\medskip
Finally, in order to find the (unconditional) distribution of $\la$, we recall that the PDF of the joint distribution for pairs of eigenvalues $\al_1\le \al_2$ of a random matrix  belonging to $GUE_2$  is given by 
$$\PP(\al_1,\al_2) = \frac{1}{2\pi}(\al_2-\al_1)^2 e^{-(\al_1^2+\al_2^2)/2}=\frac{\Delta^2 }{2\pi}e^{-(\al_1^2+\al_2^2)/2}.$$
Therefore, since $\Delta=a_2-a_1$, the  distribution for  level crossing point $\la$ with $\text{Im}\;\la \ge 0$ is given by  
\begin{center}\begin{tabular}{rl}
$\PP_{>0}(\la)$ &  $= \ds\iint_{-\infty<a_1\le a_2 <+\infty}  \PP^{\Delta}(\la)\cdot \PP(\al_1, \al_2) \ d\al_2\ d\al_1$ \\
&  $= \ds\iint_{-\infty<a_1\le a_2 <+\infty} \frac{(\al_2-\al_1)^5 }{8\pi^{3/2}} \cdot \frac{\text{Im}(\la)}{|\la|^6} \cdot e^{\frac{-(\al_2-\al_2)^2}{4|\la|^2}-\frac{(\al_1^2+\al_2^2)}{2}}\ d\al_2\ d\al_1$ \\
& $ = \ds\frac{8\text{Im}(\la)}{\pi(1+|\la|^2)^3}.$\end{tabular}\end{center}
Therefore,  the actual distribution for  level crossing point $\la\in \bC$ equals 
$$\PP_{GUE_2}(\la)= \ds\frac{4|\text{Im}(\la)|}{\pi(1+|\la|^2)^3}.$$
\end{proof}

\section{Real Gaussian ensembles}\label{sec55}

In this section, in order to prove Propositions~\ref{prop:GE2Rparts}  and \ref{prop:GE2R} stated in the Introduction, we will use the standard presentation of real $2\times 2$-matrices as linear combinations of Pauli matrices which was extensively applied in \cite{ShZa1}. Namely,  let $A = (a_+, ia_-, a_\Delta) \cdot \vec \sigma $ be a real 2-by-2 matrix with normal variables, generic up to additional multiples of identity. Here $\vec \sigma=(\sigma_1,\sigma_2,\sigma_3)$ is the standard triple of Pauli matrices. Denote the coefficient vector $(a_+, a_-, a_\Delta)$ by $\vec A$ and consider the inner product  on such triples using a Minkowski metric: 
\begin{align}
\vec A \cdot \vec B :=_{\text{def}} \vec A\pmtx{1 & 0 & 0\\ 0 & -1 & 0\\ 0 & 0 & 1} \vec B.
\end{align}
Notice that the discriminant of $A$, i.e., the expression which vanishes if and only $A$ has a multiple eigenvalue, is  given by 
 \begin{align}
D_A = \vec A \cdot \vec A = a_+^2 - a_-^2 + a_\Delta^2. 
 \end{align}
Similarly construct $B= (b_+, ib_-, b_\Delta) \cdot \vec \sigma$ and consider the linear family
\begin{align}
 C = A + \lambda B.
\end{align} 
We get
\begin{align}
D_C = D_A + \lambda^2 D_B + 2\lambda \vec A \cdot \vec B
\end{align}
 with zeroes at
\begin{align}
\lambda = - \frac{\vec A \cdot \vec B}{D_B} \pm \sqrt{\left(\frac{\vec A \cdot \vec B}{D_B}\right)^2 - \frac{D_A}{D_B}}.
\end{align}

Firstly, let us prove Proposition~\ref{prop:GE2Rparts}. 
\begin{proof}
To settle Part (i), observe  that $\frac{\lambda_+ + \lambda_-}{2} = -\frac{\vec A \cdot \vec B}{\vec B \cdot \vec B}$ which amounts to computing a single delta. In this case we make the isometric transformation $b_- \mapsto -b_-$ and let $a$ be the component of $\vec A$ along $\vec B$, which allows us to work in a 
Euclidean space for the purposes of computing $\vec A \cdot \vec B$. We also use spherical coordinates for $\vec B$ given by:
\begin{align}
b_+ &= R_B\sin\phi_B \cos\theta_B;
\\
b_\Delta &= R_B\sin\phi_B \sin\theta_B;
\\
b_- &= -R_B\cos\phi_B;
\\
-\frac{\vec A \cdot \vec B}{D_B} &= -\frac{a R_B}{R_B^2(1-2\cos^2\phi_B)} = \frac{a}{R_B\cos(2\phi_B)}.
\end{align}
Note that here $R_B^2$ is $\chi_3^2$-distributed whereas $r_B^2$ is $\chi_2^2$-distributed.
 Next observe that $c \equiv \cos\phi_B$ is uniformly distributed for any spherically symmetric distribution, which means that if $c = \pm\sqrt{\frac{1-t}{2}}$, then
\begin{align}
\rho_{\cos\phi_B}(c)dc &= \frac{dc}{2};\\
\rho_{1-2\cos^2\phi_B}(t)dt &= \frac{2\left|\frac{d}{dt}\sqrt{\frac{1-t}{2}}\right|dt}{2}dt = \frac{dt}{2\sqrt{2-2t}};\\
R_B \propto \rho_{\sqrt{\chi_3^2}}(R) &= \sqrt{\frac{2}{\pi}}R^2 e^{-R^2/2}.
\end{align}
So the distribution of the average of two level crossings  simply becomes
\begin{align}
\rho_{\frac{\lambda_+ + \lambda_-}{2}}(x) = \iiint \delta\left(x + \frac{a}{R t}\right)\rho(a,R,t) da dR dt.
\end{align} 
Resolving the delta with respect to $a$ gives $\left|\frac{da}{dx}\right| = |Rt|$ which implies that 
\begin{align}
\begin{split}
 \rho_{\frac{\lambda_+ + \lambda_-}{2}}(x) &= \int_{-1}^1 dt \int_0^\infty dR \left|R^3t\right|\frac{e^{-\frac{R^2 + (xtR)^2}{2}}}{2\pi\sqrt{2-2t}}\\
&= \int_{-1}^1 dt \int_0^\infty dR \frac{|t|}{2\pi\sqrt{2-2t}}R^3e^{-(1+x^2t^2)\frac{R^2}{2}}\\
&= \int_{-1}^{1} \frac{\left| t\right|}{\pi\sqrt{2-2 t} \left(x ^2 t^2+1\right)^2} dt.
\end{split}
\end{align}

To settle Part (ii), compute the distribution of $D_B = (b_+^2 +b_\Delta^2) - b_-^2$:
\begin{align}
\begin{split}
 \rho_D(D) &= \iint_0^\infty \delta(D+y-x)\rho_{\chi_2^2}(x)\frac{e^{-y/2}}{\sqrt{8\pi y}} dx dy\\
 &= \frac{e^{-D/2}}{\sqrt 8}\left(1 - \Theta(-D)\text{erf}(\sqrt{-D})\right).
\end{split} \label{eqn:DiscriminantDistribution}
\end{align}It's worth noting that in the positive range this is just $\frac{\rho_{\chi^2_2}(x)}{\sqrt 2}$, so the probability that the discriminant is positive is $\frac{1}{\sqrt{2}}$.

The distribution of the product $\lambda_+\lambda_- = \frac{D_A}{D_B}$ is the $D$-ratio distribution:
\begin{align}
 \rho_{\lambda_+\lambda_-}(x) = \int_{-\infty}^{\infty} |y|\rho_D(y)\rho_D(xy) dy.
\end{align}
We split the latter integral into four parts depending on the signs of $x$ and $y$:
\begin{align}
 \rho_{++} &= \int_0^\infty \frac{ye^{-y(1+x)/2}}{8} = \frac{1}{2(x+1)^2};\\
 \rho_{+-} &= \int_{-\infty}^0 -\frac{y e^{-y(1+x)/2}}{8}\text{erfc}(\sqrt{-y})\text{erfc}(\sqrt{-xy});\\
 \rho_{--} &= \int_{-\infty}^0 -\frac{y e^{-y(1+x)/2}}{8}\text{erfc}(\sqrt{-y}) = \frac{1}{2(x+1)^2}\left(1+\frac{3x-1}{\sqrt{2}(1-x)^{3/2}}\right);\\
 \rho_{-+} &= \int_0^{\infty} \frac{y e^{-y(1+x)/2}}{8}\text{erfc}(\sqrt{-xy}) = \frac{1}{2(x+1)^2}\left(1+\frac{(x-3) \sqrt{-x}}{\sqrt{2} (1-x)^{3/2}}\right).
\end{align}

Observe that only one integral out of four can not  be computed in a closed form, but it can be computed numerically using e.g.,  Mathematica. Combining terms, we get
\begin{align}
\begin{split}
\rho_{\frac{D_A}{D_B}}(x) &= \Theta(x)\left[\frac{1}{2(x+1)^2} -\int_{-\infty}^0 \frac{y e^{-y(1+x)/2}}{8}\text{erfc}(\sqrt{-y})\text{erfc}(\sqrt{-xy})dy\right] \\
 &+ \Theta(-x) \left[ \frac{1}{(x+1)^2}\left(1+\frac{3x-1 + (x-3) \sqrt{-x}}{\sqrt{8} (1-x)^{3/2}}\right)\right]
\end{split}
\end{align} 
which is the required expression. 
\end{proof}

 We now turn to Proposition~\ref{prop:GE2R}. 

\begin{lemma}\label{lm:sqrt2}
If $A$ and $B$ are independently chosen from the  $GE_2^\bR$-ensemble, then the probability of attaining a real pair of level crossing points $\lambda_\pm$ in the family $C=A+\la B$ equals  $\frac{1}{\sqrt 2}$.
\end{lemma}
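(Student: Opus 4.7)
The plan is to translate the reality of $\lambda_\pm$ into an inequality among the Minkowski invariants built from $\vec A,\vec B$, and then to partition the probability space according to the causal signatures of these two vectors. Directly from the quadratic formula for $\lambda_\pm$ displayed just above the lemma, the pair is real precisely when $(\vec A\cdot\vec B)^2\ge D_AD_B$ (no case distinction on the sign of $D_B$ is needed, since squaring by $D_B^2>0$ always clears the denominator without flipping the inequality).

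First I would dispose of the easy signature regimes. When $D_AD_B<0$ the inequality is automatic; when both $\vec A,\vec B$ are timelike ($D_A,D_B<0$) it also holds automatically, by the reverse Cauchy-Schwarz inequality for timelike vectors in the Minkowski space $\bR^{2,1}$. Using $P(D>0)=1/\sqrt 2$ from~\eqref{eqn:DiscriminantDistribution} and the independence of $A,B$, these two regimes contribute $2\cdot\tfrac{1}{\sqrt 2}(1-\tfrac{1}{\sqrt 2})=\sqrt 2-1$ and $(1-\tfrac{1}{\sqrt 2})^2=\tfrac 3 2-\sqrt 2$ to $P(\text{real})$, respectively.

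The only substantive case is both vectors spacelike. Here I would exploit the $O(2)$-invariance of the iid Gaussian in the $(a_+,a_\Delta)$-plane and write $(a_+,a_\Delta)=r_A(\cos\alpha,\sin\alpha)$ together with $a_-=r_A\sin\phi_A$ where $\phi_A\in(-\pi/2,\pi/2)$ on $\{D_A>0\}$ (and analogously for $B$); then $D_A=r_A^2\cos^2\phi_A$, $\vec A\cdot\vec B=r_Ar_B\cos\theta-a_-b_-$, and $\theta=\alpha-\beta$ is uniform on $[0,2\pi)$, independent of the rest. The addition formulas collapse the reality condition $(r_Ar_B\cos\theta-a_-b_-)^2\ge D_AD_B$ into the disjunction $\cos\theta\ge\cos(\phi_A-\phi_B)$ or $\cos\theta\le-\cos(\phi_A+\phi_B)$; since $|\phi_A|,|\phi_B|<\pi/2$ these describe two disjoint $\theta$-arcs of total length $2|\phi_A-\phi_B|+2|\phi_A+\phi_B|=4\max(|\phi_A|,|\phi_B|)$, so the conditional probability given $\phi_A,\phi_B$ is the tidy $\tfrac{2\max(|\phi_A|,|\phi_B|)}{\pi}$. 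This trigonometric collapse is the main obstacle; without it the triple angular integral over $\alpha,\beta,\phi_\bullet$ is opaque.

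The rest is routine. Integrating out $r_A,r_B$ yields the marginal density $g(\phi)=\cos\phi/\bigl(2(1+\sin^2\phi)^{3/2}\bigr)$ on $(-\pi/2,\pi/2)$ (whose integral over the interval is $1/\sqrt 2$, providing a sanity check against~\eqref{eqn:DiscriminantDistribution}). The symmetric identity $\iint\max(f(x),f(y))h(x)h(y)\,dxdy=2\int f(x)h(x)H(x)\,dx$ with antiderivative $H(\phi)=\sin\phi/\sqrt{1+\sin^2\phi}$ reduces the spacelike contribution to $\tfrac{4}{\pi}\int_0^1 u\arcsin u/(1+u^2)^2\,du$; one integration by parts plus the standard $\int_0^1 du/\bigl((1+u^2)\sqrt{1-u^2}\bigr)=\pi/(2\sqrt 2)$ evaluates this to $\tfrac{1}{\sqrt 2}-\tfrac 1 2$. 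Adding the three contributions, one finds $\sqrt 2-1+\tfrac 3 2-\sqrt 2+\tfrac{1}{\sqrt 2}-\tfrac 1 2=\tfrac{1}{\sqrt 2}$, as claimed.
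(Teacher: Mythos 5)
Your proof is correct, and I checked the details: the reality criterion $(\vec A\cdot\vec B)^2\ge D_AD_B$, the reverse Cauchy--Schwarz disposal of the doubly timelike case, the arc-length identity $2|\phi_A-\phi_B|+2|\phi_A+\phi_B|=4\max(|\phi_A|,|\phi_B|)$, the marginal density $g(\phi)=\cos\phi/\bigl(2(1+\sin^2\phi)^{3/2}\bigr)$ with total mass $\tfrac1{\sqrt2}$, and the final evaluation $\tfrac{4}{\pi}\int_0^1 \frac{u\arcsin u}{(1+u^2)^2}\,du=\tfrac1{\sqrt2}-\tfrac12$, so that the three contributions indeed sum to $\tfrac1{\sqrt2}$. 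However, your route is genuinely different from the paper's. The paper conditions on $A$ and imports formula (5.43) of \cite{ShZa1}, which gives the fraction $\kappa(a_+,a_-,a_\Delta)$ of real crossings for a \emph{fixed} $A$ (equal to $1$ if $D_A<0$ and to $1-\tfrac1\pi\arccos\tfrac{a_-^2}{a_+^2+a_\Delta^2}$ if $D_A\ge0$), and then simply averages $\kappa$ over the Gaussian law of $\vec A$ in spherical coordinates, getting $(\sqrt2-1)+(1-\tfrac1{\sqrt2})=\tfrac1{\sqrt2}$. You never use that cited formula: everything is derived from the Minkowski criterion, the causal-signature case split, the independence and uniformity of the relative angle $\theta$ in the $(+,\Delta)$-plane, and the probability $P(D>0)=\tfrac1{\sqrt2}$ already established in \eqref{eqn:DiscriminantDistribution}. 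The paper's argument buys brevity at the price of an external reference; yours buys self-containedness, manifest symmetry in $A$ and $B$, and, as a by-product, the clean conditional probability $\tfrac{2}{\pi}\max(|\phi_A|,|\phi_B|)$ in the doubly spacelike case --- integrating it over $B$ alone recovers $\kappa(A)$ (for instance $\kappa=\tfrac12$ at $\phi_A=0$), which serves as an independent check of the quoted formula. One cosmetic remark: with the Pauli parametrization the components of $\vec A$ are i.i.d.\ $N(0,\tfrac12)$ rather than $N(0,1)$, but since the reality event is invariant under separate rescaling of $A$ and $B$ this changes nothing, and the paper uses the same unit-variance convention in the neighbouring computations.
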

\begin{proof}
We use a result from \cite{ShZa1} saying that the proportion of real eigenvalues for a fixed $A$ is given by
\begin{align}
\kappa(a_+, a_-, a_\Delta) = 
\begin{cases}
1 &\text{if } D_A < 0 \\
1-\frac{1}{\pi}\arccos \frac{a_-^2}{a_+^2 + a_\Delta^2} &\text{if } D_A \geq 0,
\end{cases}
\end{align}
see formula (5.43) in loc. cit. 
The expectation value over the set of matrices with positive discriminant is given by
\begin{align}
\langle \kappa \rangle_{D_A \geq 0} = \iiint_{D_A \geq 0} \left(1-\frac{1}{\pi}\arccos \frac{a_-^2}{a_+^2 + a_\Delta^2}\right) \rho(a_+, a_-, a_\Delta) da_+ da_- da_\Delta.
\end{align}
Using spherical coordinates relative to the $a_-$-axis we can simplify the integral as:
\begin{align}
 \int_0^\infty 2\pi r^2 \frac{e^{-r^2/2}}{(2\pi)^{3/2}}dr \int_{-\frac{1}{\sqrt 2}}^{\frac{1}{\sqrt 2}} \left(1-\frac{1}{\pi}\arccos \frac{\cos^2\phi}{1-\cos^2\phi}\right) d(\cos\phi)
 = \sqrt 2 - 1.
\end{align}
On the other hand, the contribution of the set of matrices with $D_A < 0$ is just 
\begin{align}
\langle \kappa \rangle_{D_A < 0} = \iiint_{D_A < 0} \rho(a_+, a_-, a_\Delta) da_+ da_- da_\Delta = P(D_A < 0) = 1-\frac{1}{\sqrt 2}
\end{align} where the last step follows from equation (\ref{eqn:DiscriminantDistribution}). Thus the total probability of getting a real crossing value is
\begin{align}
 \langle \kappa \rangle = \sqrt 2 - 1 + 1-\frac{1}{\sqrt 2} = \frac{1}{\sqrt 2}.
\end{align}
\end{proof}%

Let us now prove Proposition~\ref{prop:GE2R}. 

\begin{proof}
 Due to the isotropy of a normally distributed vector, we are free to rotate the coordinate system in the $(b_+,b_\Delta)$-plane such that $\vec B = (r_B,b_-,0)$. This $B$-dependent choice of a basis has no impact on the distribution of $\vec A$ which has the normally distributed entries $(a_1, a_-, a_2)$ in this basis.

To settle Part (i) of the Proposition,  assume that the level crossing points $\lambda_\pm = x \pm iy$ are complex conjugate, in which case we get
\begin{align}
 x &= -\frac{\vec A\cdot \vec B}{\vec B \cdot \vec B} = \frac{a_1 r_B - a_- b_-}{r_B^2 - b_-^2};\\  
 y &= \sqrt{\frac{D_A}{D_B}-\left(\frac{\vec A \cdot \vec B}{D_B}\right)^2} = \sqrt{\frac{a_1^2 -a_-^2 + a_2^2}{r_B^2-b_-^2}-x^2}.
\end{align}
Therefore the density of  the joint distribution with respect to the Lebesgue measure in the plane takes the form
\begin{align}
  \rho(x,y) = \iiint_A\iiint_B \delta\left(x+\frac{a_1r_B-a_-b_-}{r_B^2 - b_-^2}\right) \delta\left(y-\sqrt{\frac{a_1^2 -a_-^2 + a_2^2}{r_B^2-b_-^2}-x^2}\right) \rho(\vec A, \vec B). 
\end{align}
Resolving the first delta with respect to $a_-$, we get
\begin{align}
 a_- &= \frac{a_1r_B + x(r_B^2 - b_-^2)}{b_-};\\
 \left|\frac{da_-}{dx}\right| &= \left|\frac{r_B^2 - b_-^2}{b_-}\right|.
\end{align}
Then resolving the second delta with respect to $a_2^2$, we obtain
\begin{align}
 a_2^2 &= (r_B^2 - b_-^2)(x^2 + y^2) + \left(\frac{a_1r_B + x(r_B^2 - b_-^2)}{b_-}\right)^2 - a_1^2;\\
 \left|\frac{d(a_2^2)}{dy}\right| &= \left|2y(r_B^2 - b_-^2)\right|.
\end{align}
Inserting, we get 
\begin{align}
\begin{split}
 \rho(x,y) = \int_{a_1}\iiint_B \left|\frac{2y}{b_-}(r_B^2 - b_-^2)^2\right| \rho_{a_-}\left(\frac{a_1r_B + x(r_B^2 - b_-^2)}{b_-}\right)\\
 \cdot \rho_{a_2^2}((r_B^2 - b_-^2)(x^2 + y^2) + \left(\frac{a_1r_B + x(r_B^2 - b_-^2)}{b_-}\right)^2 - a_1^2)\\
 \cdot \rho_{a_1}(a_1) \rho_{r_B}(r_B)\rho_{\theta_B}(\theta_B) \rho_{b_-}(b_-).
\end{split}
\end{align}
Expanding the expression and integrating out $\theta_B$ gives us:
\begin{align}
\begin{split}
 \rho(x,y) = \int_{-\infty}^\infty da_1 \int_0^\infty dr_B \int_{-\infty}^\infty db_- \left|\frac{2y}{b_-}(r_B^2 - b_-^2)^2\right| \frac{e^{-\left(\frac{a_1r_B + x(r_B^2 - b_-^2)}{b_-}\right)^2/2}}{\sqrt{2\pi}}\\
 \cdot \frac{e^{-((r_B^2 - b_-^2)(x^2 + y^2) + \left(\frac{a_1r_B + x(r_b^2 - b_-^2)}{b_-}\right)^2 - a_1^2)/2}}{\sqrt{2\pi}\sqrt{(r_B^2 - b_-^2)(x^2 + y^2) + \left(\frac{a_1r_B + x(r_B^2 - b_-^2)}{b_-}\right)^2 - a_1^2}}\\
 \cdot \Theta\left[((r_B^2 - b_-^2)(x^2 + y^2) + \left(\frac{a_1r_B + x(r_B^2 - b_-^2)}{b_-}\right)^2 - a_1^2\right] \\
 \cdot \frac{e^{-a_1^2/2}}{\sqrt{2\pi}} r_Be^{-r_B^2/2}\frac{e^{-b_-^2/2}}{\sqrt{2\pi}}.
\end{split}
 \end{align}
 After some extra simplifications,  we get
 \begin{align}
\begin{split} \rho(x,y) = \int_{-\infty}^\infty da_1 \int_0^\infty dr_B \int_{-\infty}^\infty db_- \left|\frac{yr_B}{2\pi^2 b_-}(r_B^2 - b_-^2)^2\right| e^{-\frac{a_1^2+r_B^2+ b_-^2}{2}}\\
 \cdot \frac{e^{-((r_B^2 - b_-^2)(x^2 + y^2) + \left(\frac{a_1r_B + x(r_b^2 - b_-^2)}{b_-}\right)^2 - a_1^2)/2}}{\sqrt{(r_B^2 - b_-^2)(x^2 + y^2) + \left(\frac{a_1r_B + x(r_B^2 - b_-^2)}{b_-}\right)^2 - a_1^2}}\\
 \cdot \Theta\left[((r_B^2 - b_-^2)(x^2 + y^2) + \left(\frac{a_1r_B + x(r_B^2 - b_-^2)}{b_-}\right)^2 - a_1^2\right]. 
 \end{split}
 \end{align}
Suppressing the superfluous subscripts from the integration variables, we obtain the triple integral from the formulation of  Proposition~\ref{prop:GE2R}.  
 
To settle Part (ii), observe that by formula \eqref{eq:form}, the density of a distribution the level crossings invariant under the $SO_2$-action  on the real axis should be proportional to $\frac{1}{(1+x^2)^2}$. By Lemma~\ref{lm:sqrt2} the total mass of the measure of level crossings  concentrated on the real axis equals $\frac{1}{\sqrt{2}}$. Using this normalization, we arrive at the expression \eqref{eq:realaxis}.
 \end{proof}


\section{Monodromy distribution for $3\times 3$ Gaussian ensembles}\label{sec6} 

In this  section we present  numerical results about the monodromy of random $3\times 3$ linear matrix families \eqref{pencil}.   
 Monodromy statistics was  collected for the cases of $GUE_3$-, $GOE_3$-, and $GE_3^\bC$-ensembles. (One can easily check that  the number of possible monodromy sequences for the matrix sizes exceeding $3$ is  so large that  it is  practically impossible to collect coherent statistical information numerically.) Some of the numerical results below are rather surprising, see Remark~\ref{remfin}. 

\medskip \noindent 
{\it General observations.}�  {\rm 
Observe that, for generic pairs of matrices $A$ and $B$ from $GUE_n$ and $GOE_n$, all level crossings are  simple and arise in complex conjugate pairs;  $\binom {n}{2}$  of them  lying  in the upper half-plane and 
$\binom {n}{2}$  lying symmetrically in the lower half-plane. We can additionally assume that all  level crossings in the upper half-plane have distinct real parts since the coincidence of the real parts happens with probability $0$.  
Denote by  $\la_1,\la_2,\dots,\la_{\binom{n}{2}}$   level crossing points in the upper half-plane ordered by the increase of their real parts. Since generically  level crossing points are
simple, let $\si_1, \si_2,\dots, \si_{\binom{n}{2}}$ be the associated sequence of transpositions  obtained as follows, see Fig.~\ref{Fig5}. Under our assumptions, for every real $\la$, the spectrum of $A+\la B$ is real and simple which means that no monodromy of the spectrum occurs when $\la$ belongs to the real axis $\bR\subset \bC$. 

\medskip
If $\la_i$ is the $i$-th level crossing point in the upper half-plane in the order of increasing  real parts, consider  the path in the $\la$-plane  starting on the real axis at $\tau=Re (\la_i)$, going  straight up  to $\la_i$, making a small loop encircling $\la_i$ counterclockwise,  and returning back to $\tau_i$.   As a result, one gets a transposition $\si_i$ of two real eigenvalues corresponding to $\tau_i=Re(\la_i)$. Doing this for each $\la_i$, $i=1,\dots , \binom{n}{2}$, we obtain a sequence of $\binom{n}{2}$ transpositions $\left(\si_1, \si_2,\ldots, \si_{\binom {n}{2}}\right)$,  $\si_i \in S_n$. 

One can easily check that the obtained sequence $\left(\si_1, \si_2,\ldots, \si_{\binom {n}{2}}\right)$ of  transpositions satisfies the following two conditions: 

\medskip
\noindent
(i) for general $A$ and $B$, they generate the symmetric group $S_n$; 

\noindent
(ii) the product $\si_1 \cdot \si_2\cdot \ldots \cdot \si_{\binom {n}{2}}$ coincides with the inverse permutation  $(n, n - 1, \dots ,1)$.

\begin{center}
 \begin{figure}[h]
				
				\includegraphics[scale=0.4]{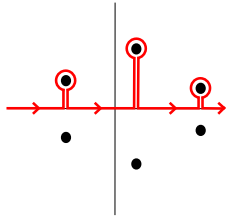}
				\caption{Creating the monodromy sequence}
			\label{Fig5}
				\end{figure}
				\end{center}
				
				Notice that the statistics of the monodromy sequences of transpositions for $GUE_n$ and $GOE_n$ are invariant under  
conjugation by the inverse permutation $(n, n-1, \dots, 1)$ as well as under reversing
the order of the transpositions. These symmetries can be explained as consequences of the symmetries of the ensembles.

Namely, if the matrix $A+\la B$ has eigenvalues $\al_1, \al_2, \dots,  \al_n$, then the matrix $-A-\la B$ has eigenvalues $-\al_1, -\al_2, \dots,  - \al_n$. These matrix pencils share the same level crossing points, and if a loop in $\bCP^1$ permutes the eigenvalues of $A + \la B$, then it applies the same permutation to the eigenvalues of $-A -\la B$. However, when we compute the monodromy associated to a pair of matrices in our ensembles, we order the (real) eigenvalues for real $\la$, and the transpositions associated to each level crossing point are written with respect to this ordering. Since the eigenvalues of $-A - \la B$ will have the  ordering opposite to those of $A + \la B$, the monodromy associated to the pair $(-A, -B)$ will be the monodromy of $(A, B)$, conjugated
by $(n, n -1, \dots , 1)$. Since the pairs $(A, B)$ and $(-A, -B)$ have the same probability density, each of the  admissible sequences of transpositions will appear with the same frequency as its conjugate.

The other symmetry of our data is its invariance under reversing the order of the transpositions. It  can be similarly explained by the equal probability density  for the pairs $(A,B)$ and $(A,-B)$. If  level crossing points of $A + \la B$  are $\la_1, \la_2, \dots , \la_{n(n-1)}$, then  level crossing points of $A - \la B$ are $-\la_1, -\la_2, \dots,  -\la_{n(n-1)}$. Level crossing points come in conjugate pairs, and the same transpositions are associated to these pairs, so if $\la_1, \la_2, \dots, \la_{\binom{n}{2}}$ are  level crossing  points of $A + \la B$ in the upper half-plane, then $-\overline \la_1, -\overline \la_2, \dots, -\overline \la_{\binom{n}{2}}$ are  level crossing points of $A -\la B$ in the upper half-plane. Since we order them according to  the increase of their real parts, which have been inverted, it now remains to show that the transposition associated to $(A, B, \la_i)$ is the same as that associated to $(A, -B, -\overline \la_i)$. Since the transposition associated to  level crossing point is the same as that associated to its conjugate, we can instead consider $(A, -B, -\la_i)$.
Observe that the transposition associated to $(A,B,\la_i)$ is determined by the eigenvalues of
$$A + (Re(\la_i) + \eps Im(\la_i))B$$
for $0 \le \eps \le 1$, and in the same way the transposition associated to $(A, -B, -\la_i)$ is determined
by
$$A + (Re(-\la_i) + \eps Im(-\la_i))(-B) = A + (Re(\la_i) + \eps Im(\la_i))B.$$
These coincide, and we conclude that the monodromy sequence associated to $(A,-B)$ is the reverse of that associated to $(A, B)$.
}

\medskip \noindent
{\it Statistical results for $GUE_3$- and $GOE_3$-ensembles.}
\medskip 

For $n = 3$, it is easy to check that there are only $8$ triples  of transpositions in $S_3$ satisfying conditions (i) and (ii). These triples are: 
$(12)(12)(13);$  $(12)(13)(23);$ $ (12)(23)(12);$ $ (13)(12)(12);$  $(13)(23)(23); (23)(12)(23);(23)(13)(12); (23)(23)(13).$  (For comparison, for $n=4$, there are already 3840 $6$-tuples of transpositions in $S_4$ satisfying (i) and (ii).)

\begin{center} 
 \begin{figure}[h]
				
				\includegraphics[scale=0.2]{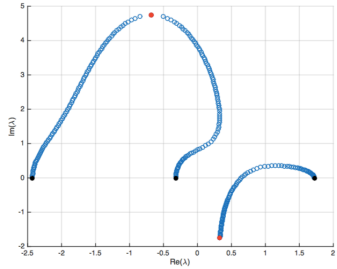}
				\caption{The first and second eigenvalues (of totally 3) collide as the parameter approaches  level crossing point, giving the transposition $(12)$.}
\label{Fig51}
\end{figure}
\end{center}

\begin{center}
 \begin{figure}[h]
				
				\includegraphics[scale=0.32]{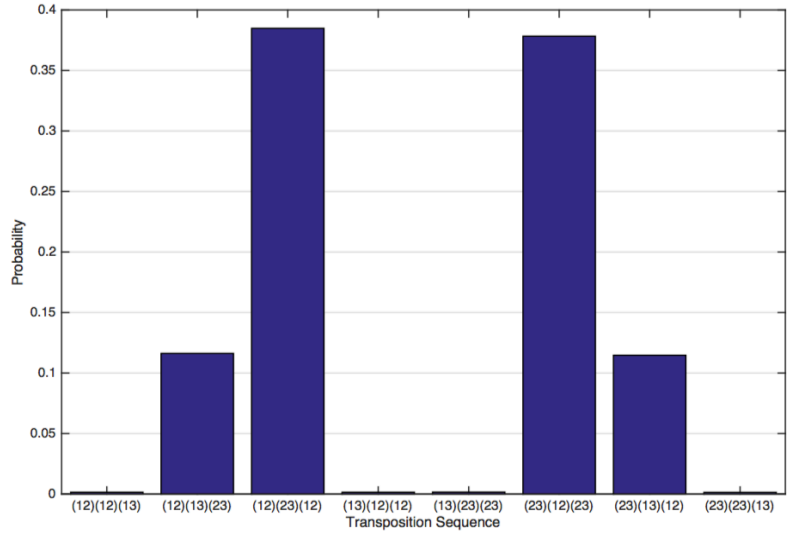}  \includegraphics[scale=0.38]{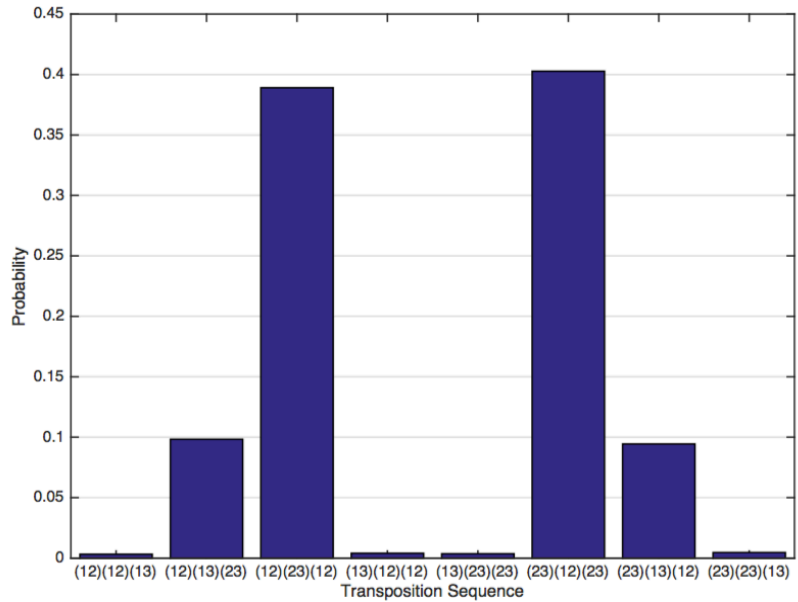}
				\caption{The probabilities of  the monodromy triples of transpositions for $GUE_3$- and $GOE_3$-matrices.}
		 \label{Fig6}
			\end{figure}

	\end{center}	

Numerical experiments were carried out in MATLAB. Namely, the MATLAB-code  computed the transposition associated to  level crossing point $\la$ of a pair of matrices $(A,B)$. More exactly,  the program calculated the eigenvalues of 
$A + (Re(\la)  + \eps Im(\la))B$ as $\eps$ runs from $0$ to $1$ in steps of $0.01$. A typical plot of the eigenvalues during this process is shown in Fig.~\ref{Fig51}. At $\eps = 0$ all of the eigenvalues are real, so we can number them in the increasing order. For each new $\eps$, the new eigenvalues are assigned the same numbers as the closest eigenvalues obtained for the previous value of $\eps$. Then, when two eigenvalues collide at $\eps = 1$, the numbers assigned to these two colliding eigenvalues give the transposition corresponding to  level crossing point $\la$. By following this procedure shown in Figure~\ref{Fig51} for each of  level crossing points in the upper half-plane in order of increasing real part,  one obtains triples of transpositions associated to $(A,B)$. This triple of transpositions complete determines the monodromy of the linear family \eqref{pencil}. Because errors can occur if the real parts of different level crossing points are very close, we discarded such pairs of matrices  when gathering monodromy statistics.  This procedure was carried out  in case of $GUE_3$- and $GOE_3$-ensembles.  The resulting statistics for $GUE_3$ (top) and $GOE_3$ (bottom) are shown in Figure~\ref{Fig6}.

\medskip\noindent						
{\it Statistical results for $GE_3^\bC$-ensemble.}
\medskip

In this case, in order to calculate the monodromy seqeunce for a general matrix family \eqref{pencil}, we must first choose a base point for the system of closed paths in the $\la$-plane which is (generically) not a level crossing point. We choose  $\la = 0$, since typically the origin  is not a level crossing point for a general pair of matrices, and the preimages of $0$ are precisely the eigenvalues of $A$. Using $\la=0$ as a base point, we need to order our level crossing points with respect to the origin  and to choose a system of paths  such that  

\noindent
(i) each path begins and ends at $0$;

\noindent
(ii)  each path goes around exactly one level crossing point;

\noindent
(iii)  each path does not intersect any  other path except at the origin. 

\medskip
As already mentioned, these level crossing points are all generically simple; so as $\la$  traverses a path around one level crossing point and returns to the origin, exactly two of the eigenvalues of $A$ will interchange.  Thus we obtain a transposition in the symmetric group $S_n$. To do this we have to order the preimages of our starting point (i.e., the eigenvalues of $A$) and keep track of how these preimages change as we follow each path. This procedure gives us an $n(n-1)$-tuple of transpositions in $S_n$. Since the concatenation of all paths encompasses all of our level crossing points, the product of all transpositions in the chosen order equals to the identity permutation. 
When $A$ and  $B$ are independently chosen from  $GE_n^\bC$, the arguments of our level crossing points are uniformly distributed, so we may order our level crossing points by the argument.  However the choice of which level crossing point is first and whether the level crossing points are ordered clockwise or counterclockwise is arbitrary. The paths we choose will start and end at $0$ and go around these level crossing points in a natural way. An example of how we choose such paths is shown in Fig.~\ref{Fig7}.

\begin{center}
	\begin{figure}[h]
				
				\includegraphics[scale=0.4]{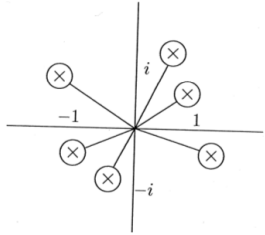}
				\caption{An example of paths  in the $\la$-plane chosen to determine the monodromy for pairs  $(A,B)$ from $GE_3^\bC$.}
			 \label{Fig7}
			\end{figure}\end{center}
			
			\begin{center}
		\begin{figure}[h]		
				\includegraphics[scale=0.4]{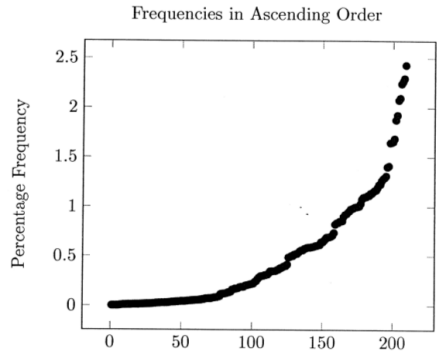}
				\caption{Frequencies of 240 possible $6$-tuples of transpositions from $S_3$ in the ascending order.}
			 \label{Fig8} 
			\end{figure} \end{center}
		
\medskip
For $A$ and $B$  in $GE_3^\bC$, there are 240  sequences of $6$-tuples of  transpositions $(\si_1,\si_2,\dots , \si_6)$ from $S_3$ satisfying the conditions:

\noindent
(i) they generate the symmetric group $S_3$; 

\noindent
(ii) the product $\si_1 \cdot \si_2\cdot \ldots \cdot \si_6$ coincides with the identity permutation  $(1,2,3).$

\medskip		
Using a similar MATLAB-code to determine the monodromy transpositions, we  generated 150000 random matrix pairs in $GE_3^\bC$ and calculated their monodromy sequences. Our numerical results show the following, see Fig.~\ref{Fig8}. 

\medskip
\noindent (i) Of the 240 possible cases, only 209 were realized and only 204 were realized more	
		than once. 
		
\noindent (ii)		The most common monodromy sequences were  $(2 3)(1 2)(2 3)(1 2)(2 3)(1 2)$, which occurred with the frequency  2.43 \%  and  $(1 2)(1 3)(1 3)(2 3)(2 3)(1 2)$ which occurred with the frequency  2.29 \%. 

\noindent (iii) 
Monodromy sequences in which one permutation occurs four times in a row followed by two occurrences of another permutation and their cyclic permutations (for example, $(1 2)(1 2)(1 2)(1 2)(1 3)(1 3)$ or $(1 2)(2 3)(2 3)(2 3)(2 3)(1 2))$ were the most rare, occurring only once or not at all. 


\begin{remark}	\label{remfin}	One particularly strange and interesting result is that the labelling of the eigenvalues seems to affect the frequencies with which certain monodromy sequences appear. In the case of $GE_3^\bC$-matrices, one can  relabel the three preimages of $\la=0$, i.e., the eigenvalues of $A$,  by using the action of $S_3$. Usually, about half of these six group elements change the frequency by either doubling or halving the original one. The other half of the group tends to keep the frequency the same, but exactly which members of $S_3$ do what varies from case to case. We have not been able to find a pattern of or an explanation  to why relabelling changes the frequencies in this peculiar way.
\end{remark}

\section{Final remarks}\label{sec:final}

In connection with our topic, one can naturally ask why we only restrict ourselves to consideration of the distributions of a single level crossing point on $\bC$ and are not trying to obtain  information about the joint distribution of all $n(n-1)$ level crossing points which obviously exists in all the above cases.  It turns out that for $n>3$, not all  $n(n-1)$-tuples of complex numbers can be realized as  level crossings and even the description of the loci of realizable $n(n-1)$-tuples is very complicated. This fact definitely means that at least for $n>3$, to get the joint distribution of level crossings on such loci will be a formidable (if not completely impossible) task,  comp. e.g. \cite{OnSh}. On the other hand, in the simplest case $n=2$, we calculate and use such joint distributions below.  



\end{document}